\documentclass{article}
\usepackage[margin=1in]{geometry}

\usepackage{subcaption}

\usepackage[utf8]{inputenc} % allow utf-8 input
\usepackage[T1]{fontenc}    % use 8-bit T1 fonts
\usepackage[hidelinks,colorlinks=true,citecolor=cyan]{hyperref}       % hyperlinks
\usepackage{url}            % simple URL typesetting
\usepackage{booktabs}       % professional-quality tables
\usepackage{amsfonts}       % blackboard math symbols
\usepackage{nicefrac}       % compact symbols for 1/2, etc.
\usepackage{microtype}      % microtypography
\usepackage{xcolor}         % colors

\usepackage{amssymb}
\usepackage{amsmath}
\usepackage{amsthm}
\usepackage{mathtools}

\usepackage{multirow}
\usepackage{multicol}

\usepackage{thmtools, thm-restate}
\usepackage{comment}
\usepackage{breqn}
\usepackage{braket}
\usepackage{ifthen}
\usepackage{tikz}
\usetikzlibrary{calc} 
\usetikzlibrary{arrows.meta}
\usetikzlibrary{positioning}

\usepackage{cleveref}

\usepackage{natbib}

\makeatletter
\AddToHook{cmd/appendix/before}{\def\cref@section@alias{appendix}\def\cref@subsection@alias{appendix}}
\makeatother

\usepackage{algorithm}
\usepackage[noend]{algpseudocode}
\algrenewcommand\algorithmicrequire{\textbf{Input:}}
\algrenewcommand\algorithmicensure{\textbf{Output:}}
\newcommand{\algrule}[1][.2pt]{\par\vskip.2\baselineskip\hrule height #1\par\vskip.2\baselineskip}

\newtheorem{theorem}{Theorem}
\newtheorem{lemma}{Lemma}

\newtheorem{definition}{Definition}

\newcommand{\R}{\mathbb{R}}
\newcommand{\E}{\mathbb{E}}

\newcommand{\poly}{\mathrm{poly}}

\newcommand{\norm}[1]{\|#1\|}

\newcommand{\ip}[2]{\left\langle #1, #2 \right\rangle}

\newenvironment{reptheorem}[1]{%
  \begingroup
  \renewcommand{\thetheorem}{\ref{#1}}%
  \begin{theorem}
}{%
  \end{theorem}
  \endgroup
}

\newenvironment{replemma}[1]{%
  \begingroup
  \renewcommand{\thelemma}{\ref{#1}}%
  \begin{lemma}
}{%
  \end{lemma}
  \endgroup
}

\title{On Worst-Case Guarantees for Graph-Based Vector Search}

\author{%
Majid Daliri\thanks{New York University.
  \href{mailto:daliri.majid@nyu.edu}{daliri.majid@nyu.edu},
  \href{mailto:cmusco@nyu.edu}{cmusco@nyu.edu},
  \href{mailto:torsten.suel@nyu.edu}{torsten.suel@nyu.edu}}
\and Mohammadreza Daneshvaramoli\thanks{UMass Amherst.
  \href{mailto:mdaneshvaram@cs.umass.edu}{mdaneshvaram@cs.umass.edu},
  \href{mailto:purnadutta@umass.edu}{purnadutta@umass.edu},
  \href{mailto:cmusco@cs.umass.edu}{cmusco@cs.umass.edu}}
\and Purna Dutta\footnotemark[2]
\and Edo Liberty\thanks{Google.
  \href{mailto:edo.liberty@gmail.com}{edo.liberty@gmail.com}}
\and Cameron Musco\footnotemark[2]
\and Christopher Musco\footnotemark[1]
\and Torsten Suel\footnotemark[1]
}

  \usepackage{nth}
  \usepackage{intcalc}

  \newcommand{\cAAAI}[1]{AAAI\ Conference\ on\ Artificial (AAAI)}

\begin{document}

\maketitle

\begin{abstract}
        Graph-based nearest neighbor search algorithms like HNSW, DiskANN, NSG, and others have surged in popularity in recent years due to their excellent empirical performance in comparison to space-partitioning methods like IVF data structures and Locality Sensitive Hashing (LSH). However, in contrast to LSH and related methods, obtaining strong worst-case theoretical guarantees for graph-based search has proven challenging.
    In this paper, we prove a strong lower bound that helps explain the lack of progress towards such guarantees. We show that no standard graph-based method (any that runs a variant of greedy search on a flat or hierarchical graph) can achieve sublinear query time for approximate nearest neighbor search in Euclidean space, even for an arbitrarily large approximation factor.
    On the other hand, we prove that sublinear query time \emph{is possible} if we allow for \emph{Steiner points} in the search graph. We obtain bounds comparable to LSH, showing e.g., that graph-based search with Steiner points can solve $c$-approximate nearest neighbor search in Euclidean space with query time $\tilde{O}(n^{1/(c-1)})$. We complement this result with initial experiments exploring how Steiner points can be incorporated into practical graph-based methods.
\end{abstract}

\section{Introduction}
High-dimensional nearest neighbor search is a fundamental, and notoriously difficult, algorithmic problem. The goal is to pre-process a vector database $X = \{x_1, \ldots, x_n\} \subset \R^d$ so that we can efficiently answer ``nearest neighbor queries''. That is, given any $q\in \R^d$, the goal is to quickly find a closest (or near-closest) vector to $q$ in $X$. Closeness is commonly measured by Euclidean distance or cosine similarity, but other metrics are also studied \citep{IM98,JayaramWaingartenZhang:2024,DhulipalaHadianJayaram:2026}. 

Nearest neighbor search has been studied for decades, becoming increasingly important as the problem now serves as a backbone for document and multimedia search systems based on machine-learned vector embeddings \citep{LewisPerezPiktus:2020,Bruch:2024,DouzeGuzhvaDeng:2026,SimhadriAumullerDouze:2026}. While it can be solved naively in $O(dn)$ time by simply comparing $q$ to every vector in $X$, in many settings, the database $X$ can involve billions or trillions of vectors with 100s to 1000s of dimensions. The goal is thus to design data structures that permit search in time \emph{sublinear} in $n$. While infeasible for exact search, sublinear-time is known to be possible if we allow for approximation \citep{Kleinberg:1997,KushilevitzOstrovskyRabani:1998}. 

The most well-known sublinear-time technique with provable guarantees is the \emph{Locality Sensitive Hashing} (LSH) framework of \cite{IM98}. Focusing on Euclidean distance for concreteness, LSH methods return, for any query $q$ and chosen approximation factor $c \geq 1$, a vector $\tilde{x} \in X$ such that, with high probability,
$\|\tilde{x} - q\|_2 \leq c \cdot \min_{x\in X} \|x - q\|_2.$
\cite{IM98} achieve this bound in $\tilde{O}(dn^{1/(c-1)})$ time using a polynomially sized data structure and polynomial preprocessing time. This was subsequently improved to $\tilde{O}(dn^{1/c})$ and later to $\tilde{O}(dn^{1/c^2 + o(1)})$ \citep{DIIM04,AndoniIndyk:2008}. Even stronger bounds can be obtained via  ``data-dependent'' LSH variants \citep{AndoniIndykNguyen:2014,AndoniRazenshteyn:2015}. Such bounds are striking: e.g., when $c = 2$, $n^{1/c^2} < 200$ even for a data set with $n = 1 $ billion points.

Moreover, in addition to strong theoretical guarantees, LSH methods are relatively simple to implement and work well in practice \citep{LvJosephsonWang:2007,AndoniIndykLaarhoven:2015,AumullerBernhardssonFaithfull:2020}. Closely related data-dependent space-partitioning methods based on clustering and product quantization have shown even better performance, forming the basis of many web-scale nearest-neighbor search systems 
\citep{JegouDouzeSchmid:2011,GuoSunLindgren:2020,JohnsonDouzeJegou:2021,SunSimchaDopson:2023}. 

\subsection{Graph-Based Nearest Neighbor Search.}
Despite the success of space-partitioning methods like LSH, a new class of \emph{graph-based nearest neighbor search} algorithms has recently emerged as a competitor. Graph-based methods include Microsoft's DiskANN \citep{KrishnaswamyManoharSimhadri:2024}, the Hierarchical Navigable Small World (HNSW) method \citep{MalkovYashunin:2020}, and many related algorithms \citep{MalkovPonomarenkoLogvinov:2014,FuCai:2016,HarwoodDrummond:2016,FuXiangWang:2019}. 
These methods build on ideas dating back to the 1990s \citep{AryaMount:1993,Clarkson:1994,Navarro:1999,Kleinberg:2000,krauthgamer2004navigating}. The basic idea is to construct a sparse search graph $G$ with a node corresponding to each point in $X$. Queries are served by running \emph{greedy search} in $G$: we start at a node $s$, and traverse the graph by moving to neighbors that reduce the distance to the query.% move to $s$'s neighbor that is closest to the query $q$, and proceed in this way until reaching a local minimum. 

Graph-based methods have frequently topped nearest neighbor search benchmarks, offering improved accuracy vs. runtime tradeoffs  \citep{AumullerBernhardssonFaithfull:2020,SimhadriWilliamsAumuller:2022,ManoharShenBlelloch:2024,SimhadriAumullerDouze:2026}. 
Naturally, these empirical results have spurred interest in theoretically explaining the  success of graph-based methods \citep{Laarhoven:2018,ProkhorenkovaShekhovtsov:2020,PengChoiChan:2023,XieYuLiu:2025,GollapudiKrishnaswamyShiragur:2025,Al-JazzaziDiwanGou:2025,Har-PeledRaichelRobson:2026}. However, all existing theoretical bounds only hold under strong dataset assumptions (uniformly random, sufficiently dense, etc.), or require an exponential dependence on the data dimension  $d$, or some other measure of intrinsic dimension like doubling dimension \citep{AryaMount:1993,Clarkson:1994,IndykXu:2023,lu2025proximity}. 

These bounds only imply sublinear query time when $d$ is a small multiple of $\log n$.
%As an example, a result of \cite{IndykXu:2023} shows that the graph-based DiskANN algorithm can support $c$-approximate nearest neighbor search in roughly  $\tilde{O}\left( (\frac{c+1}{c-1} )^d4^d\right)$ time for datasets with doubling dimension $d$. This time is only sublinear in $n$ when $d$ is a small multiple of $\log n$.
A major open question is if such an assumption is necessary to obtain performance guarantees for  graph-based methods. Concretely: 
\begin{quote}
% \vspace{-.5em}
\emph{Is there any graph-based nearest neighbor search method that, like LSH, can support search queries in $o(nd)$ time for worst-case, high-dimensional datasets?}
\end{quote}
\vspace{-1em}

\section{Our Contributions}
\label{sec:contributions}
To answer the question above, and specifically to prove any query-time lower bounds, we first need to formalize what a ``graph-based search method'' is. In this work, we consider a generic ``flat search graph'' model that captures all common methods that we are aware of (DiskANN, NSG, FANNG, EFANNA, etc.), with the exception of HNSW \citep{MalkovYashunin:2020}, which performs search on a hierarchy of graphs. We will discuss separately how to extend our results to HNSW and other hierarchical methods later.

Details on our flat search graph model are included in \Cref{sec:setup}. To summarize, we consider methods that pre-processes the dataset, $X$, to produce a directed search graph, $G$. Upon receiving a query, $q\in \R^d$, the method runs \emph{generalized greedy search} on $G$. Specifically, it maintains a list of visited nodes, initially populated with a starting node $s$. At each step, it pops the node off the list that is closest to $q$, and adds that node's out-neighbors to the list. The algorithm proceeds until some stopping condition is met,  at which point it returns the closest node to $q$ seen so far. Various stopping conditions lead to algorithms that are popular in the literature including the ubiquitous \emph{beam search} method \citep{SubramanyaDevvritKadekodi:2019}. See \Cref{sec:setup} for detailed pseudocode. The query-time of a graph-based method is the time required to execute generalized greedy search and return an approximate nearest neighbor for the query, i.e. to find some $\tilde{x}$ such that, for approximation factor $c \geq 1$, $\|\tilde{x} - q\|_2 \leq c \cdot \min_{x\in X} \|x - q\|_2$.\footnote{In this work, we do not consider preprocessing time used to build the search graph $G$: our lower bounds will hold for any graph-based method, even one that is allowed arbitrarily large preprocessing time.}

\subsection{Lower Bound for Graph-Based Search}\label{sec:lbIntro} 
Our first result explains the lack of progress on worst-case guarantees for graph-based search.
We prove that no algorithm implemented in the model above (with any $G$, any $s$, and any stopping condition) can solve the approximate search problem under the Euclidean distance in worst-case sublinear time, for any approximation factor. This is in stark contrast to what is achievable with space-partitioning methods like LSH.

\begin{theorem}\label{thm:lbIntro} No graph-based method can solve $c$-approximate nearest neighbor search under the Euclidean distance for any finite approximation factor $c$ in $o(nd)$ time.
\end{theorem}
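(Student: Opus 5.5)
We construct, for each $n$ and $d = \Theta(\log n)$ or larger, a hard dataset $X$ and argue adversarially against any fixed search graph $G$ with start node $s$. Query time in this model is at least proportional to $d$ times the number of nodes whose distance to $q$ is computed. This is because every node popped or added to the candidate list must be compared to $q$. So it suffices to show that, for some query, greedy search must touch $\Omega(n)$ nodes before it can have seen any $c$-approximate nearest neighbor. The key structural fact we exploit is that generalized greedy search only ever learns about a node by following an out-edge from a node it has already popped. The order in which it pops nodes is determined solely by distances to $q$. Thus, if the dataset makes all "decoy" points look identical (equidistant) to the query, the search cannot be steered toward the hidden near neighbor.

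\textbf{Construction.} Take $X$ to be $n$ points that are (nearly) mutually orthogonal unit vectors, e.g.\ $x_i = e_i$ in $\R^n$, or random sign vectors scaled by $1/\sqrt d$ in $d = O(\log n)$ dimensions via Johnson--Lindenstrauss, so that all pairwise inner products are tiny. For a target index $i$, consider the query $q_i = x_i$, or more robustly $q_i = \alpha x_i$. Then $\|q_i - x_i\|$ is much smaller, by a factor exceeding $c$, than $\|q_i - x_j\|$ for all $j\neq i$. Hence the only valid $c$-approximate answer is $x_i$ itself. Moreover, $\|q_i - x_j\|$ is (nearly) identical for every $j \ne i$, so from the perspective of the search all non-target nodes are indistinguishable, up to ties broken in a way we may control adversarially. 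To make the ratio arbitrarily large for finite $c$, we let the query sit exactly on $x_i$ (distance $0$), so any finite $c$ forces returning $x_i$ exactly. With exact orthogonality ($d=n$) distances to other points are exactly equal. To get $o(nd)$ rather than $o(n^2)$ meaningful for general $d$, we embed $n$ points with pairwise distances all equal up to a $(1\pm\epsilon)$ factor, and slightly perturb so ordering is fixed independent of $i$.

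\textbf{Adversary argument.} Fix $G$, $s$, and the stopping rule. Consider the execution of greedy search on a query which is equidistant from all data points (e.g.\ the origin, or $q_i$ before the target is revealed). This execution visits a fixed sequence of nodes $v_1, v_2, \dots$. For any target $i$, the execution on $q_i$ coincides with this reference execution until $x_i$ is first added to the candidate list, since all comparisons made before that are between non-target nodes with equal (or identically ordered) distances. Since the reference sequence is a fixed ordering, choosing $i$ to be the node added last (or not at all) forces $\Omega(n)$ distance evaluations, each costing $\Omega(d)$. If the algorithm stops before reaching $x_i$, it returns a point at distance $\ge$ the decoy distance, which violates the $c$-approximation since the true distance is $0$.

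The main obstacle is making the "identical execution" claim rigorous when distances are only approximately equal in low dimension $d\ll n$. The ordering among decoys can then depend on $q_i$, and $x_i$'s presence slightly changes other distances. I would handle this by choosing the query as $q_i = x_i$ and dataset points with \emph{exactly} equal pairwise distances restricted to a simplex-like structure. Alternatively, I would use a hypothesis-free counting argument: the pairwise order of decoys as seen from $q_i$ depends only on $\langle x_i, x_j\rangle$, which we can force to be exactly constant using a design (e.g.\ Hadamard-type codes with equal pairwise inner products in dimension $O(n)$ but reducible). If exact equality is unavailable, I would instead argue that out-degree-bounded exploration means the set of nodes reachable within $t$ pops is determined by the observed distance ranks, and take a union bound over targets to show that most $i$ require $\Omega(n)$ pops.
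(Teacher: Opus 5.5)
Your core argument for $d=n$ is the paper's: mutually orthogonal data, in-dataset queries $q=x_i$ so that any finite $c$ forces returning $x_i$ itself, and the observation that the sequence of popped nodes does not depend on the target until the target is inserted into the queue. Two steps, however, do not go through as written.

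\textbf{Ties.} With $x_j=e_j$ every decoy is at distance exactly $\sqrt{2}$ from $q$. The pop order is therefore decided entirely by the priority queue's tie-breaking, which is chosen by the method, not by you. ``Ties broken in a way we may control adversarially'' is not available in a lower bound. A tie rule that reads $q$ could even walk down a binary tree to $x_i$ in $O(\log n)$ pops. The paper removes ties altogether: take $x_i=a_iv_i$ with $v_i$ orthonormal and $0\le a_1<\cdots<a_n$. Then $\|x_i-x_j\|_2^2=a_i^2+a_j^2$ is strictly increasing in $j$ for every fixed $i$. So up to the insertion of the target, all comparison outcomes, and hence the pop order, coincide across targets. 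A distance-dependent termination rule can only cut this common sequence short. This scaling is exactly the ``perturbation making the order independent of $i$'' that you allude to but never specify; a generic perturbation of $e_1,\dots,e_n$ would make the order depend on $i$.

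\textbf{Randomized constructions.} A graph-based method may build $G$ at random, with the query fixed before that randomness, so you cannot pick ``the node added last.'' Use averaging instead. For each fixed $G$ and $s$, the insertion order is one fixed sequence, so at most $t$ targets are inserted within the first $t$ distance computations. Hence, for each fixed construction, a uniformly random $q\in X$ is found within $t$ computations with probability at most $t/n$. Averaging over the construction, some fixed $q$ is found with probability at most $t/n$, which for $t=o(n)$ contradicts $\delta\le 1/2$. This is the paper's count showing $c_i\ge n/4$ for at least $3n/4$ of the points.

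\textbf{Low dimensions.} You should drop the claims for $d=\Theta(\log n)$. Your argument does not establish them, and the repairs you suggest cannot work. First, $n$ pairwise equidistant points form a regular simplex and need $d\ge n-1$. Second, unit vectors with a common pairwise inner product $\beta<1$ have Gram matrix $(1-\beta)I+\beta J$, whose rank is at least $n-1$, so no ``reducible'' equal-inner-product design exists. Third, with only approximately equal distances, the decoy order seen from $q=x_i$ is governed by $\langle x_i,x_j\rangle$ and genuinely depends on $i$, which is exactly the information a search graph can exploit. The theorem needs only one hard instance, and the paper's has $d=n$. The paper explicitly leaves an $\Omega(n)$-type bound for generalized greedy search in $O(\log n)$ dimensions open. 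There it proves only an $\Omega(n^{1/2-\gamma})$ bound for pure greedy search, via a different argument: every near neighbor of the start node must be covered by one of its out-neighbors, which forces the start node to have high out-degree.
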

\Cref{thm:lbIntro} is proven in \Cref{sec:lower} by constructing a hard instance of points $X = \{x_1,\ldots,x_n\} \subset \R^n$, and showing that, no matter the choice of search graph,  greedy search is inefficient even for queries $q \in X$. This yields a lower bound for any approximation factor $c$ since, for such ``in distribution'' queries, $\min_{x \in X} \norm{x-q}_2 = 0$ and thus approximate and exact search are equivalent.

Our hard instance has $x_i = a_i v_i$, where $v_1,\ldots,v_n \in \R^n$ is an arbitrary set of orthonormal vectors, and $a_1,\ldots,a_n \in \R$ are scalars satisfying $0 \le a_1 < a_2 <\ldots < a_n$. 
Since all points in $X$ are mutually orthogonal, $\norm{x_i - x_j}_2^2 = \norm{x_i}_2^2 + \norm{x_j}_2^2 = a_i^2 + a_j^2$. It follows that $x_1$ is the closest neighbor to all other points, $x_2$ is the second closest neighbor, and so on. For this reason, if we query a point $q \in X$, greedy search will traverse $X$ in a fixed order, independent of the query, until it finds $q$. The search therefore requires $\Omega(n)$ distance computations on average for  $q \in X$.
This proof can be directly extended to a lower bound for search under cosine similarity, and to hierarchical graph-based search methods like HNSW. 

\smallskip

\noindent\textbf{Low-Dimensions.} 
\Cref{thm:lbIntro} rules out graph-based search methods that achieve worst-case sublinear query time akin to the $\tilde O(dn^{1/c^2 +o(1)})$ runtime of LSH \citep{AndoniIndyk:2008}. However, our hard instance requires $d = n$. Thus, the theorem does not rule out runtimes of the form $O(d^\beta n^{\alpha})$ for $\beta > 1$ and $\alpha < 1$ -- i.e., it leaves open the possibility of sublinear time search when the dimension $d$ is sufficiently small. The setting of small $d$ is particularly important given that, using  dimensionality reduction methods like Johnson-Lindenstrauss random projection \citep{johnson1984extensions,ailon2006approximate}, we can always project $X$ into $d = O(\epsilon^{-2} \log n)$ dimensional space, while distorting distances  by at most a $(1+\epsilon)$ factor.

%%% Chris version
We partially address the low-dimensional setting by proving a query time lower bound of $\Omega(n^{1/2-\gamma})$ for any constant $\gamma > 0$, even when $d = O(\log n)$. This lower bound is in a  more restricted setting, applying to methods that use \emph{pure greedy search}, meaning that search terminates as soon as the node removed from the search list fails to make progress towards $q$. Moreover, the lower bound holds for methods that succeed for all queries  $q \in X$ (instead of for any query with high probability). Nevertheless, it gives evidence that graph-based search cannot match LSH, even in low dimensions. Recall that for $q \in X$, the optimal nearest neighbor has distance $0$, so LSH can achieve arbitrarily small query time by taking the approximation factor $c$ sufficiently large. Our $\Omega(n^{1/2-\gamma})$ lower bounds shows this is not possible with pure greedy graph search.

%In particular, unlike LSH, we cannot drive the query complexity to be an arbitrarily small polynomial in $n$ by taking $c$ sufficiently large.

\begin{theorem}\label{thm:lbLow} There is a set of $n$ points $X \subset \R^d$ for $d = O(\log n)$ on which no graph-based method using pure greedy search can succeed for all queries $q\in X$ in $o(dn^{1/2-\gamma})$ time for any fixed $\gamma > 0$.
\end{theorem}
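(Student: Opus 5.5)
Proving \Cref{thm:lbLow} calls for a low-dimensional instance in which pure greedy search from a fixed start $s$ must, for each target $q\in X$, follow a path along which the distance to $q$ strictly decreases. It must also end exactly at $q$, since at the local minimum the closest point seen is returned and $\min_{x\in X}\|x-q\|_2=0$. Two costs are in tension. Each step pays the full out-degree of the current node, because all out-neighbors are scanned. The greedy rule also forces $q$ itself to appear among the out-neighbors of the node visited just before $q$. The plan is to build $X$ from $m\approx n^{1/2}$ \emph{clusters}, each a nearly orthogonal configuration mimicking the instance from the proof of \Cref{thm:lbIntro}. We would then show that either some node has huge out-degree on many query paths, or many queries need long paths.

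For the construction, we take $v_1,\dots,v_k\in\R^d$ with $d=O(\log n)$ to be nearly orthonormal vectors, with pairwise inner products at most $\epsilon$ by a random/JL argument, where $k=n^{\Theta(1)}$. Using these, we place points $x=a\,v_i$ with scalars chosen from a geometric-like sequence, as in \Cref{thm:lbIntro}. The key property to verify is the following. Let $q=a v_i$ and let $y$ be any point on a different direction. Then the distance satisfies $\|y-q\|_2^2\approx a^2+\|y\|_2^2$, up to an error $2\epsilon\|y\|\,a$. So among points off $q$'s direction, progress toward $q$ is essentially possible only by decreasing $\|y\|_2$, and $q$'s particular direction is invisible until it is reached. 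Fixing $\gamma$, we choose $\epsilon=n^{-\gamma'}$ small enough, with $d=O(\epsilon^{-2}\log n)$. This is where the $\gamma$ loss comes from; ideally we would take $\epsilon$ constant and let the norm ratios absorb the error.

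The counting argument then runs as follows. We consider the last step into $q$ for each $q\in X$. The predecessor $p(q)$ must have $q$ as an out-neighbor and must be closer to $q$ than all its other out-neighbors are. Because of the near-orthogonality, for a node $p$ the set of targets $q$ reachable as greedy final steps from $p$ has a structure we can analyze: essentially all out-neighbors of $p$ in distinct directions, with norm ordering. Let $D$ be the out-degree of the node visited along the path. There are two cases. If every node on the path has out-degree at most $D$, then each greedy step can only "discover" a new direction among $D$ candidates. Reaching a specific one of $k$ directions then needs about $\log_D k$ steps at minimum, but the norm monotonicity argument is stronger. Since moving between directions only decreases norm, and within a direction the path must climb back up in norm to reach $q$, we get a bound of the form (cost) $\ge \min(D,\, n/D)$ on average. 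Balancing gives $n^{1/2}$, and including the $\epsilon$ losses gives $n^{1/2-\gamma}$.

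The main obstacle is making the dichotomy "high out-degree vs.\ long path" rigorous. A clever graph could use intermediate points with slightly larger norms to fan out. The near-orthogonality error $\epsilon$ permits slight progress toward $q$ while norm increases, by a factor up to $1+O(\epsilon)$ per step. We would handle this with a potential argument. Along any greedy path toward $q=av_i$, the quantity $\|y\|_2^2-2\langle y,q\rangle$ decreases, and off-direction points contribute at most $\epsilon\|y\|a$ to the inner-product term. If the scalars are spread so that consecutive norms differ by factors $\gg 1+\epsilon$, each off-direction step must strictly lower the norm level. That bounds the number of "fan-out" levels by the number of norm scales, about $n^{\gamma}$. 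The total work is then at least $n^{1/2}/n^{\gamma}$ per query for some query, which yields the claimed $\Omega(dn^{1/2-\gamma})$ once the factor $d$ per distance computation is included.
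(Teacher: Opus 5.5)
Your proposal has a genuine gap. The construction falls outside the required dimension regime, and the central counting step is asserted rather than proved; as sketched, it is false.

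\textbf{The construction.} Taking $\epsilon=n^{-\gamma'}$ with $d=O(\epsilon^{-2}\log n)$ gives $d=n^{2\gamma'}\log n$, not $O(\log n)$. A cleverer packing cannot fix this: $n^{\Theta(1)}$ unit vectors whose pairwise inner products are all at most $n^{-\gamma'}$ in absolute value need dimension polynomial in $n$. So in $d=O(\log n)$ you must accept a constant $\epsilon$, and then the ``invisibility'' of $q$'s direction fails.

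\textbf{Why constant $\epsilon$ breaks the argument.} Take $q=a v_i$ and off-direction points $y,y'$. The step $y\to y'$ makes progress iff $\norm{y'}_2^2-\norm{y}_2^2<2\ip{y'-y}{q}$, and the right-hand side can be as large as $2\epsilon a(\norm{y}_2+\norm{y'}_2)$. This error scales with $\norm{q}_2$, not with $\norm{y}_2$. Whenever $a\gg\norm{y'}_2/\epsilon$ and the inner-product signs are favorable (random near-orthogonal directions supply such signs), a step can raise the norm by an arbitrary factor and still make progress. Spreading the scalars geometrically only creates more such pairs, since a high-level query then dwarfs every lower-level point. The consequences are:
\begin{itemize}
\item Which off-direction neighbor greedy selects depends on the signs of $\ip{v_{i'}}{v_i}$, i.e., on the query. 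So the query-independent scan behind \Cref{thm:lbIntro} does not carry over.
\item The ``one fan-out level per norm scale'' potential argument is invalid.
\item The dichotomy ``cost $\ge\min(D,n/D)$'' has no argument behind it. Your own remark that a direction can be located in $\log_D k$ steps contradicts the invisibility premise it relies on.
\item The final accounting yielding $n^{1/2}/n^{\gamma}$ is not derived.
\end{itemize}

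\textbf{The missing idea.} You do not need to analyze whole greedy paths. Under pure greedy search, the very first step computes the distance from $q$ to every out-neighbor of the start node, so it suffices to show $\deg^+_G(s)=\Omega(n^{1/2-\gamma})$.

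The paper proves this with random points $x_1,\ldots,x_n\in\{-1,1\}^d$ and neighborhoods $\mathcal N(j)=\{i:\ip{x_i}{x_j}\ge\tau\}$, where $\tau\approx\sqrt{d\log n}$. For each $y\in\mathcal N(s)\setminus\{s\}$, greedy search on query $x_y$ must not stop at $s$. So some out-neighbor $v$ satisfies $\norm{x_v-x_y}_2<\norm{x_s-x_y}_2$. On the hypercube this reads $\ip{x_v}{x_y}>\ip{x_s}{x_y}\ge\tau$, i.e., $y\in\mathcal N(s)\cap\mathcal N(v)$.

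Hence the out-neighbors of $s$ must cover $\mathcal N(s)$, which has size $\Omega(\sqrt n)$. By the random-hypercube estimates of \cite{DiwanGouMusco:2024}, each out-neighbor covers at most $|\mathcal N(s)\cap\mathcal N(v)|=O(\max\{\log n,n^{C\sqrt{\log n/d}}\})$ points. Taking $d=\Theta(\log n/\gamma^2)$, which is $O(\log n)$ for fixed $\gamma$, gives the bound. The $n^{\gamma}$ loss comes from this overlap bound, not from shrinking $\epsilon$ polynomially.
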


The hard instance for \Cref{thm:lbLow} is simply a random point set in $O(\log n)$ dimensions. We build on a recent average degree lower bound for the related problem of constructing \emph{navigable graphs} \citep{DiwanGouMusco:2024}. In short, we show that the starting node in $G$ must have degree $\Omega(n^{1/2-\gamma})$ for greedy search to succeed on all queries in $X$. Therefore, even running the very first search step already takes $\Omega(dn^{1/2-\gamma})$ time.

We remark that some graph-based methods achieve worst-case query time of roughly $2^{O(d)}$  \citep{AryaMount:1993,Clarkson:1994,IndykXu:2023,lu2025proximity}. \Cref{thm:lbLow} shows that, at least for pure greedy search, the exponential dependence on $d$ is inherent: as soon as $d = \Theta(\log n)$ the runtime must depend polynomially on $n$. 
Extending such a result to generalized greedy search is an interesting open question. Alternatively, it would be interesting to obtain runtime bounds, e.g., of the form $O(d^\beta \sqrt{n})$, which does not match LSH, but can still be significantly sublinear, even for $d = \Omega(\log n)$.

\vspace{-.5em}
\enlargethispage{\baselineskip} % lets the footnote to Thm 3 fit on page 3
\subsection{Sublinear Runtime Guarantees via Steiner Points} In spite of the lower bounds in \Cref{thm:lbIntro,thm:lbLow}, we argue that a simple modification of existing graph-based search frameworks does allow them to match the theoretical guarantees of LSH. In particular, graph-based search becomes significantly more powerful when the search graph $G$ is allowed to include \emph{Steiner nodes} that do not correspond to points in the dataset. We show:

\begin{theorem}\label{thm:ubIntro} There exists a polynomial time algorithm that, given any $n$-point dataset $X \in \R^d$, constructs a search graph $G$ over $X$ and $\tilde O(n^{1+1/(c-1)})$ additional Steiner points, such that generalized greedy search on $G$ run for $\tilde O(n^{1/(c-1)})$ steps starting from  an appropriate start node, solves $c$-approximate Euclidean nearest-neighbor search in $\tilde O(d n^{1/(c-1)})$ time, with high probability.\footnote{Here, $\tilde O(\cdot)$ hides logarithmic factors in the input parameters, the failure probability, and the \emph{aspect ratio} $\Delta = \frac{\max_{i \neq j} \norm{x_i-x_j}_2}{\min_{i \neq j} \norm{x_i-x_j}_2}$. Mild dependence on this aspect ratio is standard for LSH-based approximate nearest neighbor search results. The big-O notation also hides constants that depend on the approximation factor $c$.}
\end{theorem}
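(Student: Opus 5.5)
We plan to simulate the original Indyk--Motwani LSH data structure inside a search graph, using Steiner nodes to stand for hash buckets. First, by the standard reduction from $c$-approximate nearest neighbor to $(c,r)$-near neighbor decision problems, we handle $O(\log \Delta)$ (up to constant factors in $c$) distance scales $r$. Within each scale we use the bit-sampling / $p$-stable-type LSH family with $p_1 \geq 1/r^{\Theta(1)}$, $p_2$ small, giving $\rho = 1/(c-1)$ in the simple ball-carving/grid analysis, and concatenate $k = \log_{1/p_2} n$ hashes. We then use $L = \tilde O(n^{\rho})$ tables, so that with high probability some table puts $q$ in the same bucket as an $r$-near point, while the expected total number of far points colliding with $q$ over all tables is $O(L)$.

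The key design question is how greedy search, which only compares distances to $q$, can ``compute'' the hash of $q$. Our plan is to place Steiner points geometrically so that the closest Steiner neighbor to $q$ is exactly the one encoding $q$'s bucket. Concretely, we use a grid/lattice-type LSH, where the hash of $q$ is the nearest point of a randomly shifted, randomly rotated lattice (or a union of random centers covering a ball). For each table we then create Steiner nodes at the bucket centers that contain at least one data point; there are at most $n$ per table, hence $\tilde O(n^{1+\rho})$ in total. Each bucket-center node points to the data points in its bucket. The start node points to a ``table root'' Steiner node for each of the $L$ tables and each scale, and each root points to that table's centers. To keep degrees (and hence time) at $\tilde O(n^{\rho})$, we decompose the nearest-center computation along coordinate blocks: a tree of Steiner nodes in which, at each level, the child closest to $q$ identifies the next coordinate block of the hash. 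This works because, for product-structured lattices, nearest-center decisions factor per block, and distances to nodes differing only in one block compare exactly as the block-wise distances do.

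With this construction, generalized greedy search pops nodes in order of distance. We would run it with a budget of $\tilde O(n^{\rho})$ expansions and argue three things. First, the hash-tree nodes along $q$'s true path are closer to $q$ than sibling nodes, so they get expanded. Second, reaching the bucket nodes costs $O(k)$ expansions per table. Third, the data points in $q$'s buckets are then examined, and their expected total number beyond the $O(L)$ good ones is bounded by the far-collision count. Markov's inequality and repetition give high probability, and the returned point is the best seen, which is within $cr$ at the right scale.

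The main obstacle is the ordering issue. The priority queue is global, so Steiner nodes from many tables, scales, and the wrong tree branches compete on distance to $q$. We must ensure that the search does not waste its budget on irrelevant nodes, and that the correct path nodes are not starved. We would try to fix this by placing each table's tree at a carefully chosen offset (for example, scaling or embedding the Steiner points in extra orthogonal dimensions with equal offsets). Then all path nodes for the right branch have distance to $q$ exactly equal to a common schedule, while wrong siblings are strictly farther. We would then bound the total number of nodes closer to $q$ than the deepest correct bucket node by $\tilde O(n^{\rho})$, using the degree bounds.
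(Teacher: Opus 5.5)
Your architecture matches the paper's: Steiner nodes stand for LSH buckets, a Steiner routing tree evaluates $g(q)$ through greedy steps, and an $O(\log\Delta)$-scale reduction handles all radii. The gap is that the step you flag as the main obstacle is where the proof actually lives, and the mechanisms you propose for it do not work.

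\textbf{Routing and ordering.} Splitting the nearest-center computation into coordinate blocks does not bound branching. A single block can have $\Theta(n)$ distinct occupied cells among points sharing a prefix, so one pop can cost $\Theta(nd)$. Moreover, the closest \emph{existing} child equals $q$'s hash only when $q$'s cell is occupied. An empty bucket therefore silently routes $q$ into another bucket that your analysis never accounts for.

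Hanging all $L\cdot O(\log\Delta)$ table roots off the start node makes the ordering problem harder. For $q$ near a cell boundary, the wrong child is arbitrarily nearly as close as the right one. You would therefore need every leftover sibling to be farther than every correct node, in every table, that is still to be popped, and you give no placement achieving this. An exact common distance schedule is impossible in $\R^d$: for $u\neq v$, $\|q-u\|_2^2-\|q-v\|_2^2$ is a nonconstant affine function of $q$. Extra orthogonal coordinates are not available either, since Steiner points must lie in $\R^d$.

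The missing ideas are the following.
\begin{enumerate}
\item A comparison gadget. Take the two points $(\tau\mp\lambda)\bar a$ with $\bar a=a/\|a\|_2$ and $\tau=\theta/\|a\|_2$; their perpendicular bisector is exactly $\{q:\langle a,q\rangle=\theta\}$. A degree-$2$ balanced binary search over each coordinate's sorted bucket boundaries, with failure intervals sent to a node $s_F$, then computes $g(q)$ exactly in $O(k\log n)$ pops.
\item The freedom to take $\lambda$ arbitrarily large without moving the bisector. First restrict to $\|q\|_2\le Q$; this is legitimate because if $R^*\ge\mathrm{diam}(X)/(c-1)$, every data point, including the default output, is a valid answer. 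Then place every Steiner norm above $\max_x\|x\|_2+2Q$, and increase norms by more than $2Q$ per level toward the root. By the triangle inequality, every deeper node and every data point is closer to $q$ than every shallower node.
\item Chaining tables and radii \emph{sequentially}. Terminal and failure nodes point to the next root, and earlier structures are placed farther out, so the next root is always closer than every leftover sibling.
\end{enumerate}
With these, no wrong node is ever popped, and nothing has to be counted.

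\textbf{Bucket nodes.} Your bucket design separately breaks the time bound. Giving a bucket node edges to all of $X_I$ makes popping it cost $|X_{g(q)}|$ distance computations. Generalized greedy search cannot stop partway through inserting a node's out-neighbors, so the Indyk--Motwani safeguard of stopping after $3L$ points does not apply. Your far-collision bound controls only points beyond distance $cR$. Consider a tight cluster of $\Theta(n)$ points around $q$, plus one distant point so that many scales exist. The cluster falls into $q$'s bucket in nearly every table at every scale much larger than the cluster, and a single pop then costs $\Theta(nd)$.

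The paper instead gives each terminal node one out-edge to a single representative, keeping every out-degree at most two. That variant is where the exponent $\rho=\log(1/p_1)/\log(p_1/p_2)$ comes from, with $k=O(\log n/\log(p_1/p_2))$. The paper proves $\rho<1/(c-1)$ \emph{strictly} for the Datar et al.\ hash family with large bucket width. It uses that slack to build each scale at some $c_0<c$ with radius ratio $c/c_0$, which absorbs the approximation loss of the reduction.
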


We prove \Cref{thm:ubIntro} by showing that a Euclidean norm LSH method of \cite{IM98} and \cite{DIIM04} can be \emph{simulated} using greedy search on a graph $G$ that includes a carefully chosen set of Steiner points. While the full construction is somewhat intricate, the core idea is simple. To explain it, we first recall the general approach taken in LSH methods. The central object in such methods is a locality sensitive hash function, which has the property that two points are more likely to have the same hash value if they are close to each other. This allows one to restrict the nearest neighbor search problem: we hash the query point $q$ and just search for a near neighbor among the $\ll n$ points in the dataset that hash to the same value. Repeating this restricted search a sufficient number of times, and returning the best result from the repetitions, ensures that an approximate nearest neighbor to $q$ is returned with high probability. 

\cite{DIIM04} employs an LSH function based on random projecting points to a one-dimensional line, $v \in \R^d$. This line is divided into equal-length segments (``buckets''), and the hash value of a point is the index of the bucket that it lands in. Without loss of generality, $v$ can be shifted to lie far away and on one side of all points in $X$ without changing the hash function. Referring to the example in \Cref{fig:randomProjection}, $x_2$ and $x_{12}$ hash to bucket 1, $x_1, x_3,$ and $x_4$ to bucket 2, etc. 
% Nearby points are more likely to land in the same bucket. 

Our key observation is that this hash function can be implemented using greedy search with Steiner points. Here we give a stylized explanation of how to do so -- our actually construction is somewhat more intricate. We place Steiner points $s_1, s_2,\ldots,$ along $v$ at the center of each bucket. 
By the Pythagorean theorem, the closest Steiner point to any $x_j \in X$ is the point at the center of the bucket that $x_i$ hashes to. We add an edge to $x_i$ from that Steiner point.
We then place an additional starting node, $s_0$, far away from $X$. We add out-edges from $s_0$ to all other Steiner points. In the first step of greedy search, we will navigate from $s_0$ to the Steiner point, $s_i$, corresponding to the query $q$'s hash value. Search will then proceed to points in $X$ that also hashed to the same bucket (and are thus connected to $s_i$). 
By locating $v$ far away from the point set, we ensure that for all non-trivial queries $q$ (i.e., for which not all points are $c$-approximate nearest neighbors), any $s_i$ is further from $q$ than any $x_j$. This ensures that greedy search properly progresses from $s_i$ to the database points, instead of exploring other Steiner nodes.

The above construction, which is visualized in \Cref{fig:randomProjection}, is the key ingredient used in implementing an LSH search method within the framework of graph-based search. The full implementation requires further ideas. For example, rather than directly connecting the starting node to a Steiner point for each hash bucket, we need to leverage additional Steiner points to implement a more efficient binary search for the hash bucket. We also chain together several basic search graphs to implement repetition, which is required to find an approximate nearest neighbor with high probability. 
Details are given in \Cref{sec:steiner}.

%\paragraph{Proof intuition.}
%The lower bounds come from the fact that, without extra nodes, the graph only has the original data points to work with, and those points may simply not contain the right structure for fast search. Steiner points let us add exactly that missing structure. The idea is to build auxiliary nodes that play the role of an indexing scheme, so the search can quickly route the query into a much smaller candidate set. Once that structure is in place, we can prove a sublinear runtime bound instead of just hoping the graph geometry will guide the search.

\begin{figure}[h]
\centering
\raisebox{-0.5\height}{\begin{tikzpicture}[scale=.77, >=Latex]

% ---- parameters ----
\def\theta{28}
\def\L{4.1}
\def\tick{0.10}

% ---- axes ----
% \draw[<->] (-3.2,0) -- (3.2,0) node[right] {$x$};
% \draw[<->] (0,-2.6) -- (0,2.6) node[above] {$y$};

% ---- direction unit vector u=(cos,sin) and perpendicular up ----
\pgfmathsetmacro{\ux}{cos(\theta)}
\pgfmathsetmacro{\uy}{sin(\theta)}
\pgfmathsetmacro{\upx}{-sin(\theta)}
\pgfmathsetmacro{\upy}{ cos(\theta)}

% ---- the line representing v (both directions, arrows at both ends) ----
\draw[<->] ({-\L*\ux},{-\L*\uy}) -- ({\L*\ux},{\L*\uy}) node[above right, xshift=-1pt,yshift=-2pt,] {$v$};;

% --------------------------------------------------------------------
% Unrolled tick marks (edit each block independently)
% --------------------------------------------------------------------

% open paren
\def\tParOne{-3.7}
\node[rotate=\theta] at ({\tParOne*\ux},{\tParOne*\uy}) {$($};
% Tick 1
\def\tOne{-3}
\node[red] at ({\tOne*\ux},{\tOne*\uy}) {\tiny $\bigstar$};
\node[above left, red] at ({\tOne*\ux + 0.00*\upx},{\tOne*\uy + 0.00*\upy}) {$s_1$};
% closeParen
\def\tParTwo{-2.3}
\node[rotate=\theta] at ({\tParTwo*\ux},{\tParTwo*\uy}) {$)$};

% open paren
\def\tParOne{-2.2}
\node[rotate=\theta] at ({\tParOne*\ux},{\tParOne*\uy}) {$($};
% Tick 2
\def\tTwo{-1.5}
\node[blue] at ({\tTwo*\ux},{\tTwo*\uy}) {\tiny $\bigstar$};
\node[above left, blue] at ({\tTwo*\ux + 0.00*\upx},{\tTwo*\uy + 0.00*\upy}) {$s_2$};
% closeParen
\def\tParTwo{-.8}
\node[rotate=\theta] at ({\tParTwo*\ux},{\tParTwo*\uy}) {$)$};

% open paren
\def\tParOne{-.7}
\node[rotate=\theta] at ({\tParOne*\ux},{\tParOne*\uy}) {$($};
% Tick 3
\def\tThree{0}
\node[orange] at ({\tThree*\ux},{\tThree*\uy}) {\tiny $\bigstar$};
\node[above left, orange] at ({\tThree*\ux + 0.00*\upx},{\tThree*\uy + 0.00*\upy}) {$s_3$};
% closeParen
\def\tParTwo{.7}
\node[rotate=\theta] at ({\tParTwo*\ux},{\tParTwo*\uy}) {$)$};

% open paren
\def\tParOne{.8}
\node[rotate=\theta] at ({\tParOne*\ux},{\tParOne*\uy}) {$($};
% Tick 4
\def\tFour{1.5}
\node[teal] at ({\tFour*\ux},{\tFour*\uy}) {\tiny $\bigstar$};
\node[above left, teal] at ({\tFour*\ux + 0.00*\upx},{\tFour*\uy + 0.00*\upy}) {$s_4$};
% closeParen
\def\tParTwo{2.2}
\node[rotate=\theta] at ({\tParTwo*\ux},{\tParTwo*\uy}) {$)$};

% open paren
\def\tParOne{2.3}
\node[rotate=\theta] at ({\tParOne*\ux},{\tParOne*\uy}) {$($};
% Tick 5
\def\tFive{3}
\node[magenta] at ({\tFive*\ux},{\tFive*\uy}) {\tiny $\bigstar$};
\node[above left, magenta] at ({\tFive*\ux + 0.00*\upx},{\tFive*\uy + 0.00*\upy}) {$s_5$};
% closeParen
\def\tParTwo{3.7}
\node[rotate=\theta] at ({\tParTwo*\ux},{\tParTwo*\uy}) {$)$};

% --------------------------------------------------------------------
% Points and per-projection colors (edit these lists)
% --------------------------------------------------------------------
\def\npts{12}

% x_i positions as x/y pairs
\def\points{
  -2.1/ 1.2,
  -3.1/-1,
  -1.4/ 0.6,
  -1.5/-1,
  -0.4/ 1.3,
   0.1/-0.9,
   1.2/ 1.5,
   .5/0,
   2.5/ .5,
   1/.4,
   2.8/ 1.5,
  -3/ 0.4
}

% Color for each projection point p_i (must have 12 entries)
% Use any xcolor names (e.g. red, blue!60, black, gray, teal, etc.)
\def\pcolors{
  blue,
  red,
  blue,
  blue,
  orange,
  orange,
  teal,
  orange,
  magenta,
  teal,
  magenta,
  red}

% realized I need to shift the points
\def\dx{-3*\uy}
\def\dy{3*\ux}

% Build a shifted copy of \points into \pointsShifted
\def\pointsShifted{}%
\foreach \px/\py in \points {%
  \pgfmathsetmacro{\sx}{\px+\dx}%
  \pgfmathsetmacro{\sy}{\py+\dy}%
  \ifx\pointsShifted\empty
    \xdef\pointsShifted{\sx/\sy}%
  \else
    \xdef\pointsShifted{\pointsShifted,\sx/\sy}%
  \fi
}%

% Overwrite \points so the rest of your code doesn't change
\let\points\pointsShifted

% Tick 0
\node[black] at ({-.6*\dx},{-.6*\dy}) {\tiny $\bigstar$};
\node[above left, black] at ({-.6*\dx},{-.6*\dy}) {$s_0$};

% ---- draw points and orthogonal projections, naming coordinates ----
\foreach [count=\i] \px/\py in \points {
  % name the original point
  \coordinate (x\i) at (\px,\py);

  % scalar projection t = <p,u>
  \pgfmathsetmacro{\tproj}{(\px*\ux + \py*\uy}

  % projected point = t*u, name it p<i>
  \coordinate (p\i) at ({\tproj*\ux},{\tproj*\uy});

  % dotted gray projection segment
  \draw[densely dotted, gray] (x\i) -- (p\i);

  % original point
}

% ---- color and draw each projection point (unrolled via a second loop) ----
% This keeps the projection-point styling easy to control separately.
\foreach [count=\i] \c in \pcolors {
  \fill[\c] (p\i) circle (1.5pt);
  \fill[\c] (x\i) circle (1.5pt);
  \node[anchor=south west, xshift=-2pt, yshift=-2pt] at (x\i) {$x_{\i}$};
}

\end{tikzpicture}}%
\hspace{4em}
\raisebox{-0.5\height}{\begin{tikzpicture}[>=Latex, scale=1]

%------------------------------------------------
% Coordinates (compact layout, about 2 inches high)
%------------------------------------------------

% Left layer
\coordinate (s0) at (0,0);

% Middle layer
\coordinate (s1) at (1.5,  1.6);
\coordinate (s2) at (1.5,  0.8);
\coordinate (s3) at (1.5,  0.0);
\coordinate (s4) at (1.5, -0.8);
\coordinate (s5) at (1.5, -1.6);

% Right layer: x_2, x_12, x_4, x_1, x_3, x_6, x_5, x_8, x_10, x_7, x_9, x_11
\coordinate (x2)  at (3.3,  2.2);
\coordinate (x12) at (3.3,  1.8);
\coordinate (x4)  at (3.3,  1.4);
\coordinate (x1)  at (3.3,  1.0);
\coordinate (x3)  at (3.3,  0.6);
\coordinate (x6)  at (3.3,  0.2);
\coordinate (x5)  at (3.3, -0.2);
\coordinate (x8)  at (3.3, -0.6);
\coordinate (x10) at (3.3, -1.0);
\coordinate (x7)  at (3.3, -1.4);
\coordinate (x9)  at (3.3, -1.8);
\coordinate (x11) at (3.3, -2.2);

%------------------------------------------------
% Draw directed edges
%------------------------------------------------

% s0 -> s1,...,s5
\draw[->] (s0) -- (s1);
\draw[->] (s0) -- (s2);
\draw[->] (s0) -- (s3);
\draw[->] (s0) -- (s4);
\draw[->] (s0) -- (s5);

% s1 -> x2, x12
\draw[->] (s1) -- (x2);
\draw[->] (s1) -- (x12);

% s2 -> x1, x3, x4
\draw[->] (s2) -- (x1);
\draw[->] (s2) -- (x3);
\draw[->] (s2) -- (x4);

% s3 -> x6, x5, x8
\draw[->] (s3) -- (x6);
\draw[->] (s3) -- (x5);
\draw[->] (s3) -- (x8);

% s4 -> x10, x7
\draw[->] (s4) -- (x10);
\draw[->] (s4) -- (x7);

% s5 -> x9, x11
\draw[->] (s5) -- (x9);
\draw[->] (s5) -- (x11);

%------------------------------------------------
% Draw s_i stars and labels
%------------------------------------------------

\node[black] at (s0) {{\tiny $\bigstar$}};
\node[anchor=east, xshift=0pt] at (s0) {$s_0$};

\node[red] at (s1) {{\tiny $\bigstar$}};
\node[anchor=east, xshift=-2pt, text=red] at (s1) {$s_1$};

\node[blue] at (s2) {{\tiny $\bigstar$}};
\node[anchor=east, xshift=-2pt, text=blue] at (s2) {$s_2$};

\node[orange] at (s3) {{\tiny $\bigstar$}};
\node[anchor=east, xshift=-2pt, yshift=-4pt, text=orange] at (s3) {$s_3$};

\node[teal] at (s4) {{\tiny $\bigstar$}};
\node[anchor=east, xshift=-2pt, text=teal] at (s4) {$s_4$};

\node[magenta] at (s5) {{\tiny $\bigstar$}};
\node[anchor=east, xshift=-2pt, text=magenta] at (s5) {$s_5$};

%------------------------------------------------
% Draw x_i black dots and labels
%------------------------------------------------

\fill[red] (x2)  circle (1.2pt);   \node[anchor=west, xshift=2pt] at (x2)  {$x_2$};
\fill[red] (x12) circle (1.2pt);   \node[anchor=west, xshift=2pt] at (x12) {$x_{12}$};
\fill[blue] (x4)  circle (1.2pt);   \node[anchor=west, xshift=2pt] at (x4)  {$x_4$};
\fill[blue] (x1)  circle (1.2pt);   \node[anchor=west, xshift=2pt] at (x1)  {$x_1$};
\fill[blue] (x3)  circle (1.2pt);   \node[anchor=west, xshift=2pt] at (x3)  {$x_3$};
\fill[orange] (x6)  circle (1.2pt);   \node[anchor=west, xshift=2pt] at (x6)  {$x_6$};
\fill[orange] (x5)  circle (1.2pt);   \node[anchor=west, xshift=2pt] at (x5)  {$x_5$};
\fill[orange] (x8)  circle (1.2pt);   \node[anchor=west, xshift=2pt] at (x8)  {$x_8$};
\fill[teal] (x10) circle (1.2pt);   \node[anchor=west, xshift=2pt] at (x10) {$x_{10}$};
\fill[teal] (x7)  circle (1.2pt);   \node[anchor=west, xshift=2pt] at (x7)  {$x_7$};
\fill[magenta] (x9)  circle (1.2pt);   \node[anchor=west, xshift=2pt] at (x9)  {$x_9$};
\fill[magenta] (x11) circle (1.2pt);   \node[anchor=west, xshift=2pt] at (x11) {$x_{11}$};

\end{tikzpicture}}
\caption{We illustrate the basic building block of our graph-based simulation of the projection-based LSH method of \cite{DIIM04}. Steiner points $s_1, \ldots, s_5$ are placed along a random line, $v$, and one additional Steiner point, $s_0$, is placed far away from the dataset. Running greedy search on the directed search graph on the right, starting at $s_0$, implements the core LSH primitive: for a query $q$, it will first search for nearest neighbors among all points that project into the same bucket on $v$.}\label{fig:randomProjection}
\end{figure}

\subsection{Empirical Evaluation of Steiner Point-Based Methods}

In combination, the lower bounds of \Cref{thm:lbIntro,thm:lbLow} and the upper bound of 
\Cref{thm:ubIntro} show that, at least in theory, Steiner points can significantly increase the power of graph-based nearest neighbor search.
%when augmented with Steiner nodes, greedy graph-based search can simulate LSH, and thus achieve sublinear query time for approximate nearest neighbor search in high-dimensional Euclidean space. This is in sharp contrast to the situation without Steiner nodes, where the lower bounds of \Cref{thm:lbIntro,thm:lbLow} rule this out.
In \Cref{sec:experiments}, we explore the practical implications of this finding. We augment ParlayANN's \citep{ManoharShenBlelloch:2024} implementation of DiskANN with Steiner points chosen in a very simple way: we cluster the dataset into $k = cn$ clusters for some small constant $c < 1$ using $k$-means, and add the cluster centroids as Steiner points. We build a search graph using the standard DiskANN algorithm, but over the dataset augmented with these Steiner points. We find that even this very simple method demonstrates improved performance on several standard data sets. 
We note that other works have considered the direct or indirect use of Steiner points during search graph construction, often to improve performance on out-of-distribution queries \citep{jaiswal2022ood,chen2024roargraph,yue2026efficient,BaranchukPersiyanovSinitsin:2019}. 
Our findings suggest that further explorations in this direction may be worth consideration. %leave open the question of how Steiner points 

\section{Problem Formalization and Greedy Search}
\label{sec:setup}
Before moving on to our main technical sections, we restate the problem studied in this work more formally, and provide pseudocode for the generalized greedy search algorithm that our lower bounds cover. 

\smallskip

\noindent\textbf{Nearest Neighbor Search.} In the approximate nearest neighbor (ANN) problem, we are given a dataset $X = \{x_1, \ldots, x_n\} \subset \R^d$ and a distance function $D:\R^d \times \R^d$. For a query $q \in \R^d$, the goal is to approximately minimize $\min_{i\in 1, \ldots n}D(x_i,q)$.  Typical choices of $D$ include the Euclidean distance $D(x,y) = \|x-y\|_2 = (\sum_{j=1}^d (x[j] - y[j])^2)^{1/2}$ or inverse cosine similarity $D(x,y) = 1 - \frac{\ip{x}{y}}{\norm{x}_2\norm{y}_2}$. Practical systems typically aim to return the top $k$ closest vectors to $q$ from $X$ for $k \geq 1$. For simplicity, this paper focuses on the $k = 1$ case, in line with most prior theoretical work in the area.

\smallskip

\noindent\textbf{Greedy Graph-Based Search.} We consider methods that solve the nearest neighbor search problem by running generalized greedy search (\Cref{alg:gen_greedy_search}) on a directed search graph $G = (\{1, \ldots, n\}, E)$. $G$ has one node corresponding to every data point in $X$. We use the notation $\mathcal{N}_G(i)$ to denote the set of out-neighbors of a node $i$ in $G$. $\mathcal{N}_G(i)$ contains all $j$ for which the ordered pair $(i,j) \in E$.

\begin{algorithm}[h!]
    \caption{Generalized Greedy Search}
    \label{alg:gen_greedy_search}
    \begin{algorithmic}[1]
        \Require Dataset $X = \{x_1, \ldots, x_n\}$, search graph $G$, starting node $s$, distance function $D$, query $q$.
        % \Ensure Approximate nearest neighbor $\tilde{x} \in X$.
        \algrule
        \State Initialize min-priority queue $\mathcal{C}$.
        \Comment{Elements are node IDs, priorities are distances to $q$.}
        \State Initialize dictionary $\mathcal{D}$. \Comment{Used to keep track of already explored nodes.}
        \State Initialize $\tilde{x} = x_s$. Insert $(s, D(q, x_s))$ into $\mathcal{C}$ and $\mathcal{D}$.  
        \While{\textbf{\textcolor{red}{not [termination condition]}}} \label{line:stop_condition}
            \State $(j,D(q,x_j)) \gets \textrm{extractMin}(\mathcal C)$. 
            \Comment{Pop the node closest to $q$.}
            \If{$D(x_j,q) \leq D(\tilde{x},q)$} $\tilde{x}\gets x_j$. \Comment{Update guess for nearest neighbor.}\label{set_x_tilde}
                
            \EndIf
            \State For all $k \in \mathcal{N}_G(j)$, if $k$ is not in $\mathcal{D}$, insert $(k,D(q,x_k))$ into $\mathcal{C}$ and $\mathcal{D}$.
        \EndWhile
        \State \Return $\tilde{x}$.
    \end{algorithmic}
\end{algorithm}

Generalized greedy captures a variety of algorithms depending on how the stopping condition is set on Line \ref{line:stop_condition}. One option is to terminate if $D(x_j, q) \geq D(\tilde{x}, q)$, which means we failed to make one-step progress towards $q$. This approach is often simply called greedy search. More common in practice is to use beam search with beam width $b$, which is equivalent to terminating only if $D(x_j, q) \geq D(x_{j_b}, q)$, where $x_{j_b}$ is the $b^{\text{th}}$ closest node to $q$ among all nodes in $\mathcal{D}$. We refer the reader to \cite{Al-JazzaziDiwanGou:2025} for a discussion of various termination conditions, emphasizing that our lower bounds hold for \emph{any such condition}. Indeed, for the purpose of search with worst-case query complexity, one natural approach is simply to terminate after a fixed number of computational steps, $T$. %, have been executed.

With \Cref{alg:gen_greedy_search} in place, we can state our formal definition of graph-based ANN search:  

\begin{definition}[Graph-Based $c$-Approximate Nearest Neighbor Search]
    \label{def:ann_graph}
Let $c \geq 1$ be an approximation factor. Consider a potentially randomized algorithm, $\mathcal{A}$, that, given a dataset $X\subset\R^d$, returns a search graph $G$, starting node $s$, and termination condition COND. We say that $\mathcal{A}$ solves the $c$-approximate nearest neighbor search problem with query time $T$ and failure probability $\delta\in (0,1)$ if, for any fixed $q \in \R^d$, with probability at least $1-\delta$, \Cref{alg:gen_greedy_search} run using $G$, $s$ and  COND returns a solution $\tilde{x}$ in time $\leq T$ satisfying
$D(\tilde{x},q) \leq c \cdot \min_{x\in X} D(x,q)$.
\end{definition}
Generalized greedy search and \Cref{def:ann_graph} naturally extend to search graphs whose node set is $X \cup S$ for some set of Steiner nodes $S$. The nodes in $S$ are treated identically to those in $X$. The only change is that the output $\tilde x$ is the closest \emph{non-Steiner node} to $q$ that has been popped off the queue. If no node in $X$ is popped before termination, a default output is returned, e.g., $\tilde x = x_1$.

\section{Lower Bounds for Graph-Based Search}
\label{sec:lower}
Our main research question is whether algorithms satisfying  \Cref{def:ann_graph} with runtime $T= o(nd)$ exist. 
Our first result answers this question in the negative: for worst-case inputs, no graph-based search method with finite approximation factor can solve approximate nearest neighbor search in sublinear time when $D$ is the Euclidean distance. 
% As we will discuss, our proof easily extends to related settings, e.g., when $D$ is the inverse cosine similarity, and when the search method can use a hierarchical, rather than flat, search graph. 
As discussed, our hard instance is a set of $n$ vectors in $\R^n$. We prove a weaker lower bound (\Cref{thm:lbLow}) for $O(\log n)$-dimensional points in \Cref{sec:low-dim-single-source-lb}.

\begin{theorem}[\Cref{thm:lbIntro} restated]\label{thm:main_lower}
There is a set of $n$ points, $X \subset \R^{n}$, such that no graph-based method satisfies \Cref{def:ann_graph} with  $D$ being the Euclidean distance, $c$ finite, $\delta \le 1/2$, and $T = o(nd)$.
%and a set of queries $\mathcal Q \subset \R^n$ such that \Cref{alg:gen_greedy_search} run with any search graph $G$ over $\mathcal{X}$, any starting node $s$, and any termination condition, requires expected runtime $\Omega(n)$ to solve \Cref{prob:ann} under the Euclidean distance for any finite approximation factor $C$ on a random query from $\mathcal Q$.
\end{theorem}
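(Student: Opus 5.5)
We take the hard instance sketched in \Cref{sec:lbIntro}: $x_i = a_i e_i \in \R^n$ for the standard basis vectors $e_i$ and distinct scalars $0 < a_1 < a_2 < \cdots < a_n$ (say $a_i = i$). We restrict attention to queries $q = x_k \in X$. For such a query $\min_{x \in X} \norm{x-q}_2 = 0$, so any finite $c$ forces the output to satisfy $\tilde x = x_k$ exactly. Hence the algorithm must pop node $k$ from the queue before termination. The plan is to show that, whatever $G$, $s$ and COND are, popping node $k$ requires popping roughly $k$ nodes. Averaging over $k$ then shows that some fixed $q$ forces $\Omega(n)$ pops with probability at least $1/2$.

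The key structural fact is that for $q = x_k$ and any $j \neq k$, $\norm{x_j - q}_2^2 = a_j^2 + a_k^2$, which is increasing in $j$. Thus, apart from $k$ itself (at distance $0$), the priority order of nodes in the queue is the fixed order $1,2,\ldots,n$, independent of $k$. First we fix a deterministic $G$ and $s$ and argue by induction on steps: the run of \Cref{alg:gen_greedy_search} on query $x_k$ coincides with a single query-independent \emph{reference run} until node $k$ first enters the queue. To see this, note that the set of inserted nodes and the choice of extractMin depend only on relative distances among non-$k$ nodes, which are query independent. Once $k$ is inserted it is popped next, since it has distance $0$.

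Now define $\tau(k)$ as the step at which $k$ is first inserted in the reference run, counting nodes popped (or distance computations performed). The reference run inserts at most one fresh node per neighbor examined, and each distinct $k$ needs its own insertion. So among the $n$ values $k$, the $\tau(k)$ are determined by a single sequence of insertions. Hence at most $m$ values of $k$ can be inserted within the first $m$ distance computations. With $m = n/2$, at least half the $k$ require $\ge n/2$ distance evaluations, each costing $\Theta(d) = \Theta(n)$ time. Termination conditions cannot help, since they can only stop early, which yields failure.

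The main obstacle is handling randomness in $\mathcal A$ together with the quantifier order in \Cref{def:ann_graph}, which says that for every fixed $q$ the algorithm succeeds with probability $\ge 1-\delta$. We would handle this by Yao-style averaging. For each realization of $(G,s,\text{COND})$, at most $m$ of the $n$ queries succeed within time $m \cdot d$. Taking $k$ uniform, $\E_k \Pr[\text{success in time } md] \le m/n$. So some fixed $k$ has success probability $\le m/n$, which is less than $1/2$ when $m < n/2$. This contradicts $\delta \le 1/2$ unless $T \ge (n/2)\cdot d$, i.e.\ $T = \Omega(nd)$. The edge case $\delta = 1/2$ exactly is absorbed by taking $m = n/2 - 1$. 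The remaining care point is checking that ties in extractMin, and the initial node $s$ being $k$ itself, only affect $O(1)$ queries.
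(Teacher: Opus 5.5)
Your proposal is correct and follows the paper's proof: it uses the same orthogonal hard instance $x_i = a_i v_i$, the same observation that the execution is query-independent until $q$ enters the queue, and the same counting of positions in the fixed insertion order. As you ultimately do, that count must be in distance computations rather than pops, since a single pop of a high-degree node can insert many nodes. Your Yao-style averaging, which exhibits a single fixed query $x_k$ whose probability of success within time $md$ is at most $m/n$, matches the per-query quantifier of \Cref{def:ann_graph} more cleanly than the paper's phrasing in terms of expected runtime, but it is the same argument.
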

\begin{proof}
As sketched in \Cref{sec:lbIntro}, our hard instance $X = \{x_1,\ldots,x_n\}$ has $x_i = a_i v_i$, where  $v_1,\ldots,v_n \in \R^n$ is an orthonormal basis and $a_1,\ldots,a_n \in \R$ satisfy $0 \le a_1  <\ldots < a_n$.
We will show that, for a query $q$ drawn uniformly at random from $X$, any graph-based method fails to output $q$ with probability at least $1/2$ in $o(nd)$ time. This yields the lower bound since, when $q \in X$, solving the approximate search problem for any finite approximation factor $c$ requires returning $q$ itself.

%Our hard instance will have $\mathcal Q = \mathcal X$ -- i.e., our queries are in our dataset. Thus, to solve \Cref{prob:ann} for any finite approximation factor, given a query $q \in \mathcal{X}$, the algorithm must output $q$ itself. We will argue that this requires $\Omega(n)$ time on average over a random query $q \in \mathcal X$.

Since all points in $\mathcal X$ are mutually orthogonal, $\norm{x_i - x_j}_2^2 = \norm{x_i}_2^2 + \norm{x_j}_2^2 = a_i^2 + a_j^2$ for all $i,j \in [n]$. That is, for any $x_i, x_j \neq x_k$, $\norm{x_i - x_k}_2 < \norm{x_j - x_k}_2$ whenever $a_i^2 < a_j^2$, i.e., whenever $i <j$. So, for any query $q \in  X$, at each step of greedy search, if we do not directly pop $q$ off the search queue $\mathcal C$ (and succeed in outputting $q$), we simply pop off the point with the smallest index. I.e., the execution of the algorithm is identical for any query $q \in {X}$, up to the point that $q$ is added to $\mathcal C$. %Since the algorithm is simply doing a non-adaptive linear scan over the data, it's expected runtime on a random $q \in X$ is $\Omega(nd)$.

Formally, let $r_i$ be the iteration of the while loop in \Cref{alg:gen_greedy_search} that $x_i$ is added to $\mathcal C$ when $q = x_i$. If $x_i = s$ is the starting node, let $r_i = 0$. Since nodes are added to $\mathcal{C}$ in the same order regardless of the query, the number of distance computations required to find query $q = x_i$ is equal to $c_i = |\{j \in [n]: r_j \le r_i\}|$. A simple but key observation is that $c_i \ge n/4$ for at least $3n/4$ points in $X$.

Now, if the algorithm has success probability $1/2$, with probability $1/2$ over the randomness used in selecting $G$ and $s$, greedy search must succeed when querying at least half of the points from $X$. At least $3n/4 - n/2 = n/4$ of those points have $c_i \ge n/4$, and thus require $\Omega(nd)$ time to find with greedy search.  
 It follows that the expected (and thus, worst-case) runtime of greedy search is $\Omega(nd)$ for any graph-construction methods with success probability $\geq 1/2$.
\end{proof}

\smallskip

\noindent \textbf{Extensions.} 
We remark that \Cref{thm:main_lower} in fact extends to any algorithm that only accesses the distance function through comparisons of the form $D(q,x) \le D(q,y)$. In our hard instance, the ordering of distances from the query $q$ to points in $X$ is independent of $q$, so any such algorithm  executes identically for all $q$, until $q$ is returned. Our same argument thus yields an $\Omega(nd)$ lower bound for methods based on \emph{hierarchical search graphs}, like HNSW. These methods perform greedy search, but on a layered graph: each layer is a search graph over a subset of points $L_i \subseteq X$ with $L_1 \subseteq L_2 \subseteq L_3,\ldots$.
They first use generalized greedy search to find a set of near neighbors to $q$ in $L_1$. They then use these points as starting points for a greedy search in $L_2$, etc. As in flat graph methods, the distance function is only used for comparisons of the form $D(q,x) \le D(q,y)$.

The lower bound also easily extends to the case when $D$ is the inverse cosine similarity: $D(x,y) = 1 - \frac{\ip{x}{y}}{\norm{x}_2\norm{y}_2}$. In this case, we append an additional coordinate to each point (so the points are now in $\R^{n+1})$, with value $\sqrt{1-a_i^2}$. This ensures that all points are unit norm and $D(x_i,x_j) = 1-\sqrt{(1-a_i^2)(1-a_j^2)}$. Again,  $D(x_i,x_k) < D(x_j,x_k)$ exactly when $i < j$, and so the search order is independent of the query and the lower bound proof goes through unchanged. Similar constructions give lower bounds, e.g., for different $p$-norms. % and other popular distance metrics.

\section{Sublinear Search with Steiner Points}\label{sec:steiner}

We next prove \Cref{thm:ubIntro}, which shows that, with the ability to add Steiner points to the search graph, we can implement an LSH-based ANN data structure with generalized greedy search, skirting the lower bounds of \Cref{sec:lower}. In particular, we implement the original LSH-based algorithm of \cite{IM98} using the Euclidean LSH function of \cite{DIIM04}. While we focus on this particular implementation for simplicity, we believe that similar ideas likely allow for the simulation of related approaches, such as those achieving refined bounds for Euclidean distance \citep{AndoniIndyk:2008} or bounds for other similarity metrics, like cosine similarity \citep{charikar2002similarity}.

\cite{IM98} solves the approximate nearest neighbor search problem by solving a logarithmic number of simpler ``fixed radius'' search problems. In particular, for approximation factor $c \ge 1$ and radius $R$, the $(R,c)$-near-neighbor problem asks us to
return some $\tilde x \in X$ with $\norm{\tilde x - q}_2 \le cR$ if $\min_{x \in X} \norm{x-q}_2 \le R$. If $\min_{x \in X} \norm{x-q}_2 > cR$,  we must return FAIL. If $\min_{x \in X} \norm{x-q}_2 \in (R,cR]$, we can exhibit either behavior. 
Letting $\Delta = \frac{\max_{i \neq j} \norm{x_i-x_j}_2}{\min_{i \neq j} \norm{x_i-x_j}_2}$ be the \emph{aspect ratio} of the dataset $X$,  $c$-approximate ANN search can be reduced to  $O(\log \Delta)$ fixed-radius instances with geometrically spaced radii \citep{IM98,Har01}. We sketch how this reduction is implemented via generalized greedy search on $G$ at the end of this section, with a full proof in \Cref{app:steiner-lsh-to-ann}. Here, we start by focusing on  fixed radius search.

\subsection{Simulating Fixed Radius LSH with Generalized Greedy Search} 

To solve the $(R,c)$-near-neighbor problem, we use an LSH function introduced by \cite{DIIM04}.

\[
    h_{a,b}(x)=
    \left\lfloor
        \frac{\langle a,x\rangle+b}{w}
    \right\rfloor ,
\]
where $a\in\mathbb R^d$ has independent standard Gaussian entries and $b$ is drawn uniformly from $[0,w]$, with $w>0$ being a bucket width.
 Concatenating $k$ independent instantiations of this hash function  (i.e., with independently chosen $a$ and $b$) yields a length-$k$ hash signature for any point:
\begin{align}
    \label{eq:hash_function}
    g(x)=(h_{a_1,b_1}(x),\ldots,h_{a_k,b_k}(x)).
\end{align}
Given a query $q$, we compute the hash $g(q)$ and, if the set $X_{g(q)}=\{x\in X:g(x)=g(q)\}$ is non-empty, we examine an arbitrary point in the set. We return this point if it is within distance $cR$ of $q$.\footnote{A variant used by \citep{DIIM04} examines all points in the bucket, arguing that with high probability the total number of points examined is small. This variant gives a slight improvement from $\tilde O(d n^{1/(c-1)})$ to $\tilde O(d n^{1/c})$ query time. However, it is more complicated to simulate via graph-based search, so we focus on the simpler \cite{IM98} approach.}
The procedure is repeated $L$ times, with $L$ independent choices of $g$, outputting FAIL if no point within distance $cR$ is found in any of the $L$ repetitions.
The main result of \cite{IM98} implies that, when implemented with the LSH function of \cite{DIIM04}, this procedure obtains the following guarantee:\footnote{Formally, for an LSH family with collision probability  $\geq p_1$ for any point within radius $R$ of $q$, and $\leq p_2$ for any point outside radius $cR$, \cite{IM98} obtain a dependence on $n^\rho$ where $\rho= \frac{\log(1/p_1)}{\log(p_1/p_2)}$. As we discuss further in \Cref{app:steiner}, the $\rho$ achieved by the hash family of \cite{DIIM04} depends only on $c$.}

\begin{theorem}
\label{thm:datar-lshSimple}
For any $R>0$, $c>1$, and $\delta\in(0,1)$, there is a choice of bucket width $w$, hash repetition number $k=O(\log n)$, and table repetition number $L=O(n^{\rho}\log(1/\delta))$, where  $\rho$ is a constant depending only on $c$ with $\rho < \frac{1}{c-1}$, such that, for every fixed query $q$ with $\min_{x\in X}\norm{x-q}_2\le R$, the query procedure discussed above returns $\tilde x\in X$ satisfying $\norm{\tilde x-q}_2\le cR$ with probability at least $1-\delta$. The method has query time $\tilde O(dkL)= \tilde O(dn^{\rho}\log(1/\delta))$.
\end{theorem}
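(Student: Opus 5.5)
We prove \Cref{thm:datar-lshSimple} in the standard way: first bound the collision probabilities of a single hash $h_{a,b}$, then amplify by concatenation ($k$) and repetition ($L$).

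\textbf{Step 1 (single-hash collision probability).} Fix $x,q$ with $t=\norm{x-q}_2$. By $2$-stability of the Gaussian, $\langle a,x-q\rangle \sim t\,Z$ with $Z\sim N(0,1)$. Conditioned on the projection gap $u=|\langle a,x-q\rangle|$, the uniform shift $b$ makes $x$ and $q$ share a bucket with probability $\max(0,1-u/w)$. Hence
\[
p(t)=\int_0^w \frac{2}{t}\,\phi(u/t)\,\Bigl(1-\frac{u}{w}\Bigr)\,du ,
\]
where $\phi$ is the standard normal density. This is decreasing in $t$ and depends only on $t/w$. Set $p_1=p(R)$ and $p_2=p(cR)$. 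The quantity $\rho=\log(1/p_1)/\log(1/p_2)$ is scale invariant, so fixing $w=rR$ for a suitable constant $r$ makes $\rho$ depend only on $c$.

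\textbf{Step 2 (bounding $\rho$).} This is the main obstacle: showing there is a $w$ with $\rho<1/(c-1)$. I would follow \cite{DIIM04}, who show that for large $r$ one has $1-p(t)\approx \sqrt{2/\pi}\,t/w$. That gives
\[
\rho\approx \frac{\log(1-\Theta(1/r))}{\log(1-\Theta(c/r))}\to \frac1c ,
\]
and $1/c<1/(c-1)$. Concretely, I would expand $p(t)=1-\sqrt{2/\pi}\,(t/w)+O((t/w)^2)$ and choose $r$ large enough, depending on $c$, that the error terms keep $\rho$ strictly below $1/(c-1)$. This slack should make the estimate easy.

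\textbf{Step 3 (amplification).} Choose $k=\lceil \log_{1/p_2} n\rceil=O(\log n)$, the constant depending on $c$. Then a point at distance $>cR$ collides with $q$ under $g$ with probability at most $p_2^k\le 1/n$. By linearity, the expected number of such far points in $X_{g(q)}$ is at most $1$, and by Markov it is at most $2$ with probability at least $1/2$.

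The near point $x^*$ with $\norm{x^*-q}_2\le R$ collides with probability
\[
p_1^k\ge p_1\cdot n^{-\rho}.
\]

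The procedure inspects one arbitrary point of the bucket, which could be a far point. To fix this, I would check up to three points per table, i.e.\ $O(1)$ points. Then a table succeeds whenever $x^*$ collides and there are at most $2$ far collisions. Since each table checks every point it inspects, any point it returns is within $cR$, so success in a table means some point within $cR$ is found. Both events hold together with probability at least $p_1n^{-\rho}-$ (the failure probability of the far-point bound). To avoid handling the correlation between these two events, I would instead use the global Markov argument of \cite{IM98}: sum the far collisions over all $L$ tables and bound the total.

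Taking $L=O(n^{\rho}\log(1/\delta))$ independent tables, the probability that $x^*$ collides in none of them is
\[
(1-p_1n^{-\rho})^L\le \delta/2 .
\]
The total number of far collisions over all tables has expectation at most $L$. Scanning the buckets until $O(L)$ points have been inspected therefore succeeds with probability at least $1-\delta$ after adjusting constants. Alternatively, the success probability can be boosted in blocks.

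\textbf{Step 4 (query time).} Computing each $g$ costs $O(dk)$ and each inspected point costs $O(d)$, for a total of $O(dkL+dL)=\tilde O(dn^{\rho}\log(1/\delta))$, as claimed in \Cref{thm:datar-lshSimple}.
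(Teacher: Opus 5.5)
There is a genuine gap in Step~3: you prove the guarantee for a different query procedure than the one in the statement. The theorem concerns the procedure that examines a \emph{single, arbitrary} point of $X_{g(q)}$ in each table. This is exactly what the graph simulation later implements, since each terminal node $s_I$ has one out-edge to a fixed representative. Write $\rho'=\log(1/p_1)/\log(1/p_2)$ for the exponent you call $\rho$. With your choice $k=\lceil\log_{1/p_2}n\rceil$, nothing prevents a far point from landing in $q$'s bucket with constant probability in each table. Conditioned on $x^*$ colliding, the only available bound on the expected number of far colliders, $n(p_2/p_1)^k$, is of order $n^{\rho'}$. So an adversarially chosen representative may be far in every table. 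Your repairs (inspecting three points per table, or scanning buckets until $O(L)$ points are inspected via a global Markov bound) are precisely the multi-point variant that the paper's footnote sets aside as harder to simulate with greedy search. Your argument is a fine proof of that variant, with the better exponent $\rho'\approx 1/c$, but not of the stated theorem. Separately, a single global Markov bound with an $O(L)$ threshold gives only constant failure probability, so the $\log(1/\delta)$ dependence does need the block boosting you mention.

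The missing idea is to choose $k$ so that a bucket containing $x^*$ is, with constant probability, free of far points. Since $\Pr[x^*\text{ collides and no far point collides}]\ge p_1^k-np_2^k$, take $k=\lceil\log(2n)/\log(p_1/p_2)\rceil=O(\log n)$. This makes the probability at least $p_1^k/2=\Omega(n^{-\rho})$ with $\rho=\frac{\log(1/p_1)}{\log(p_1/p_2)}=\frac{\rho'}{1-\rho'}$. Then $L=O(n^{\rho}\log(1/\delta))$ independent tables fail with probability at most $\delta$ directly, with no Markov argument over tables. But now $\rho<\frac{1}{c-1}$ is equivalent to $\rho'<\frac1c$ \emph{strictly}, and the slack you rely on in Step~2 disappears. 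Since $\rho'\to 1/c$ as the bucket width grows, a first-order expansion only gives $\rho\to\frac{1}{c-1}$. The paper handles this by writing $w=WR$ and $\alpha=\sqrt{2/\pi}$, and noting that the truncation error in $p(t)$ is at most $\frac{\alpha t}{W}e^{-W^2/(2t^2)}$. It then keeps the second-order term, $\log(1/p(t))=\frac{\alpha t}{W}+\frac{\alpha^2t^2}{2W^2}+O(W^{-3})$. This yields $\rho'=\frac1c\bigl(1-\frac{\alpha(c-1)}{2W}+O(W^{-2})\bigr)<\frac1c$ for $W$ large in terms of $c$. In effect, this is strict convexity of $t\mapsto-\log(1-\alpha t/W)$, which vanishes at $t=0$.
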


\paragraph{Simulating a Single Hash Table.} 
To simulate the LSH scheme above, we first need to show how to use graph-based greedy search to implement a single hash table. To do so, we begin by associating one Steiner node to each of the non-empty buckets in the hash table, i.e., we have a Steiner node $s_I$ for each $k$-tuple of integers $I$ such that $I = g(x_j)$ for some $x_j \in X$. There are at most $n$ such Steiner nodes. We also introduce a single ``failure'' Steiner node, $s_{F}$, which is  associated with all empty buckets in the table, i.e. all $k$-tuple of integers that do not equal $g(x_j)$ for any $j \in \{1, \ldots, n\}$. Each $s_I$ is connected with an out-edge to one arbitrary point in $X_{I}$ -- i.e., to an arbitrary point with hash value $I$.

We then build an (incomplete) binary tree of Steiner nodes, rooted at a starting node $s$, which is used to efficiently route a query to the Steiner representative, $s_{g(q)}$, or to $s_{F}$ if $q$'s bucket, $X_{g(q)}$, is empty. The internal nodes of the binary tree are chosen so that greedy search will reach the appropriate terminal or failure node in $O(k\log n)$ steps. Concretely, we show:

\begin{lemma}[Simulating one LSH table]
\label{lem:one-table-simulation-main}
Fix any \(0\le Q<\infty\)\footnote{Our construction will only work for queries with some bounded norm $Q$. However, we can set $Q$ arbitrarily large. For fixed-radius search, we can take $Q$ large enough that any query outside this radius has no neighbor within distance $R$.}, concatenated \cite{DIIM04} hash function  $g(x)=(h_{a_1,b_1}(x),\ldots,h_{a_k,b_k}(x))$,
and a dataset \(X\subset\mathbb{R}^d\) of size \(n\).
There is a directed graph \(G\) over \(X\) and $O(kn)$ additional Steiner points, with $O(kn)$ edges, and a Steiner starting node
\(s\), with the following
property.  For every query \(q\) with $\norm{q}_2 \le Q$, generalized greedy search
initiated at \(s\) and run for $T=O(k\log n)$ steps either:
\begin{enumerate}
    \item terminates at a designated Steiner failure node $s_{F}$, if \(X_{g(q)}=\varnothing\); or
    \item reaches a designated terminal node $s_{g(q)}$ if
          \(X_{g(q)}\neq\varnothing\), with a single out-edge to an arbitrary point in \(X_{g(q)}\), and then pops this out-neighbor off the search queue ($\mathcal{C}$ in \Cref{alg:gen_greedy_search}).
\end{enumerate}
All nodes in $G$ have out-degree $\le 2$. The search uses $O(T)$ distance computations and $\tilde O(Td)$ total time.
\end{lemma}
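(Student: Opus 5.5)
The plan is to build the routing structure as a trie over the $k$ coordinates of the hash signature. Each level of the trie is implemented by a balanced binary search tree of Steiner points, and geometry forces greedy search to behave like binary search at every internal node.

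\textbf{Step 1: one comparison gadget.} For a single hash $h_{a,b}$ and a threshold $\theta$ between two consecutive bucket boundaries, we decide whether $\langle a,q\rangle+b<\theta$. Place two Steiner points symmetric about the hyperplane $\{x:\langle a,x\rangle+b=\theta\}$:
\[
p_\pm = z + (\theta - b)\hat a/\|a\| \pm \epsilon\,\hat a ,
\]
where $\hat a=a/\|a\|$ and $z$ is a far-away offset orthogonal to $a$. By the Pythagorean theorem,
\[
\|q-p_\pm\|_2^2 = \|q_{\perp}-z\|^2 + (\langle\hat a,q\rangle - t_\pm)^2 .
\]
The common orthogonal term cancels in the comparison, so $p_-$ is closer to $q$ exactly when the projection of $q$ lies below the threshold. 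Ties occur only on a measure-zero set, and they can be avoided because every bucket boundary is itself a threshold, combined with a consistent tie-breaking rule.

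\textbf{Step 2: stacking gadgets into a search tree.} For coordinate $i$, let the set of distinct values $h_{a_i,b_i}(x)$ appearing among the surviving points (those consistent with the prefix chosen so far) have size $m\le n$. Build a balanced BST of depth $O(\log m)$, where each internal node has exactly the two children $p_\pm$ of a gadget. Its leaves are either a specific value $v$ or a "gap", meaning an empty interval between present values; every gap leaf points to $s_F$. A leaf for value $v$ becomes the root of the next level's tree, built over the points whose first $i$ coordinates match the prefix. The final-level leaves are $s_I$, each with one out-edge to a point of $X_I$. Counting nodes, the prefix trie has at most $kn$ distinct prefixes, and each BST has size linear in its number of leaves, which gives $O(kn)$ Steiner points and edges with out-degree $\le 2$. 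The path length is $k\cdot O(\log n)=O(k\log n)$.

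\textbf{Step 3: forcing the greedy order (main obstacle).} Generalized greedy search pops the closest node in the whole queue, not merely a child of the last-popped node. I need the correct child $p$ popped at depth $\ell$ to be closer to $q$ than every other node already in the queue, i.e. all the rejected siblings from earlier levels. I would enforce this with offsets $z$ that move monotonically toward the data region along a fixed direction $u$ orthogonal to all the $a_i$, obtained by embedding in an extra dimension or assuming $d>k$ after padding. Concretely, the node at depth $\ell$ sits at height $H_\ell$ along $u$, with $H_\ell$ decreasing geometrically-fast in $\ell$, chosen so that the decrease in $(H_\ell - \langle u,q\rangle)^2$ from one level to the next exceeds the largest possible projection-discrepancy term. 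This requires $\|q\|\le Q$, and because all projections of $q$ and of the relevant thresholds are bounded in terms of $Q$ and the data, the required separations are finite. As a result, every deeper node is strictly closer than every shallower node regardless of the query, so each pop is the correct child of the previous pop. Rejected siblings never surface before the path ends.

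At a gap leaf, $s_F$ has no out-edges, so the termination condition fires at $s_F$. At $s_I$, the out-neighbor $x\in X_I$ must then be popped, which requires the data points to be closer to $q$ than all queued Steiner nodes. To guarantee this, I place all Steiner heights above a level $H_{\min}$ large enough that $\|q-x\|\le 2Q+\max\|x\|$ is smaller than the distance to any Steiner point.

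Finally, the cost is $T=O(k\log n)$ pops, each of which inserts at most two nodes. That gives $O(T)$ distance computations at $O(d)$ each, plus the queue overhead, for $\tilde O(Td)$ total time.
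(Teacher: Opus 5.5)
Your overall architecture matches the paper's. It uses a two-point comparison gadget whose perpendicular bisector is the threshold hyperplane, and a balanced BST over occupied buckets and gaps for each coordinate. Successful leaves become roots of the next coordinate's tree over the active set $X_I$, and the placement makes deeper nodes closer to every $q$ with $\norm{q}_2\le Q$.

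\textbf{The gap is in Step 3}, in how you enforce that depth ordering. You rely on a single direction $u$ orthogonal to all of $a_1,\ldots,a_k$. For Gaussian $a_i$, such a nonzero $u\in\R^d$ exists only when $d>k$. The lemma must hold for every $d$ with $k=\Theta(\log n)$, for instance any constant $d$.

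Your workaround of embedding into an extra dimension is not available in the model. Steiner nodes must be points of $\R^d$, where \Cref{alg:gen_greedy_search} evaluates $\norm{q-s}_2$. A point $(s',h)\in\R^{d+1}$ with $h\neq 0$ is at distance $\sqrt{\norm{q-s'}_2^2+h^2}$ from the padded query. It is therefore a point of $\R^d$ carrying an additive distance penalty, which is a more powerful primitive than the model allows.

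\textbf{The repair is what the paper does.} You do not need a shared height coordinate, only norm separation. For $\norm{q}_2\le Q$ we have $\norm{q-p}_2\in[\norm{p}_2-Q,\ \norm{p}_2+Q]$. Require two things:
\begin{itemize}
\item every Steiner node at depth $\ell$ has norm exceeding that of every node at greater depth by more than $2Q$;
\item every Steiner norm exceeds $\max_{x\in X}\norm{x}_2+2Q$.
\end{itemize}
Then deeper nodes, and all data points, are always strictly closer to $q$ than shallower nodes.

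You can inflate norms inside $\R^d$ for any $d\ge 1$ with the free parameter you already have. Take $z=0$ and $\epsilon$ large, so $p_\pm=(t\pm\epsilon)\hat a$ with $t=(\theta-b)/\norm{a}_2$. This keeps the same bisector while giving $\norm{p_\pm}_2\ge \epsilon-|t|$; it is exactly the paper's $\lambda$. Placing nodes from the deepest level upward then yields the required separations. Per-gadget offsets orthogonal only to that gadget's own $a_i$ would also work with norm separation when $d\ge 2$, but large $\epsilon$ covers $d=1$ as well.

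\textbf{Two smaller points.}
\begin{itemize}
\item You must place $s_F$ deeper, i.e.\ with smaller norm, than every tree node. Failure leaves occur at many depths, and $s_F$ must be popped immediately after any of them.
\item $s_F$ having no out-edges does not by itself stop the search, because rejected siblings remain in $\mathcal C$. The termination condition must be chosen to stop upon popping $s_F$ or the bucket representative, as the paper does.
\end{itemize}
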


We sketch the idea behind \Cref{lem:one-table-simulation-main} below, with a full proof in  \Cref{app:steiner}. The key observation is that we can place Steiner points so that deciding between two Steiner points in a greedy search step is equivalent to evaluating a linear threshold function. Successive evaluation of these threshold functions allows us to route the search based on the value of $g(q)$. In particular, in \Cref{app:steiner-comparison-gadget} we prove:

\begin{lemma}[Steiner Comparison Gadget]
\label{lem:steiner-comparison-gadget}
Fix a nonzero vector $a\in\R^d$ and a threshold $\theta\in\R$. There exist two
Steiner points $s_L,s_R\in\R^d$ such that, for every query $q\in\R^d$,
\[
    \|q-s_L\|_2\, < \|q-s_R\|_2
    \quad \Longleftrightarrow \quad
    \langle a,q\rangle < \theta .
\]
Further, we can pick $s_L$ and $s_R$ so that $\norm{s_L}_2$ and $\norm{s_R}_2$ are arbitrarily large.
\end{lemma}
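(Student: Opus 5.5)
The plan is to place the two Steiner points symmetrically about the hyperplane $\{q : \langle a,q\rangle = \theta\}$, so that the perpendicular bisector of $s_L$ and $s_R$ is exactly this hyperplane. Concretely, let $u = a/\norm{a}_2$ and let $p_0 = \frac{\theta}{\norm{a}_2^2} a$ be a point on the hyperplane. Then set
\[
    s_L = p_0 - t u + z, \qquad s_R = p_0 + t u + z,
\]
where $t>0$ is a scalar and $z$ is an arbitrary vector orthogonal to $a$ (possibly $z=0$).

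The first step is the direct computation, which I expect to be a one-line identity:
\[
    \norm{q-s_R}_2^2-\norm{q-s_L}_2^2 = -2\langle q, s_R - s_L\rangle + \norm{s_R}_2^2-\norm{s_L}_2^2 .
\]
Here $s_R - s_L = 2tu$. Moreover, $\norm{s_R}_2^2-\norm{s_L}_2^2 = 4t\langle u, p_0+z\rangle = 4t\,\theta/\norm{a}_2$, since $z\perp u$. Substituting gives
\[
    \norm{q-s_R}_2^2-\norm{q-s_L}_2^2 = \frac{4t}{\norm{a}_2}\bigl(\theta - \langle a,q\rangle\bigr).
\]
Because $t>0$ and distances are nonnegative, this quantity is positive iff $\langle a,q\rangle<\theta$. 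That is exactly the claimed equivalence, and ties correspond precisely to $\langle a,q\rangle=\theta$.

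The second step is the norm claim. The identity above holds for every $t>0$ and every $z\perp a$. Taking $t\to\infty$ (or $\norm{z}_2\to\infty$, which requires $d\ge 2$) makes both $\norm{s_L}_2$ and $\norm{s_R}_2$ arbitrarily large, since $\norm{s_{L}}_2,\norm{s_R}_2 \ge t - \norm{p_0}_2$. Using $t$ large suffices in all dimensions, including $d=1$.

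There is essentially no obstacle here beyond bookkeeping. The only care needed is checking that the cross term $\langle u,z\rangle$ vanishes and that the sign convention matches the strict inequality direction. I would note in passing that the freedom in $z$ and $t$ is what later lets the points be placed far from the data, as required by the construction in \Cref{lem:one-table-simulation-main}.
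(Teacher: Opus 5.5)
Your proof is correct and is essentially the paper's: with $z=0$, your points $p_0 \mp t u$ are exactly the paper's $(\tau\mp\lambda)\bar a$ with $\tau=\theta/\norm{a}_2$, and the same difference-of-squared-distances identity gives the equivalence, with $t\to\infty$ giving the norm claim. The extra orthogonal shift $z$ is a harmless generalization that the later constructions never need.
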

The proof of \Cref{lem:steiner-comparison-gadget} is simple: $s_L$ and $s_R$ are just different scalings of $a$, placed so that the perpendicular bisector of the line between them is exactly the hyperplane $\langle a,q \rangle = \theta$. We can move $s_L$ and $s_R$ arbitrarily far in opposite directions along $a$, keeping their midpoint fixed and preserving this bisector -- this will be useful later, in ensuring that our greedy search continues to `progress' through successive comparison gadgets.

\smallskip

\noindent\textbf{A Steiner Node Routing Tree.} With \Cref{lem:steiner-comparison-gadget} in place, we first sketch the proof of \Cref{lem:one-table-simulation-main} when $g(x) = h_{a,b}(x)$ uses just a single hash function (i.e., $k=1$). We build a binary search tree over the bucket values that occur among the
data points. Each non-leaf node has two children, $s_L$ and $s_R$, placed in order to compare  \(h_{a,b}(q) = \langle a,q\rangle+b\) to one
bucket boundary, via \Cref{lem:steiner-comparison-gadget}. Greedy search follows the child corresponding to the outcome. The leaves are the terminal Steiner nodes $s_{i}$ associated with $X_{i} = \{x \in X: h_{a,b}(x) = i\}$ when this set is nonempty, or Steiner nodes associated with  `failure intervals' between consecutive nonempty buckets. These failure leaves have a single out edge to the  failure node $s_F$. Since
there are at most \(n\) nonempty buckets, the tree has $O(n)$ leaves, and thus \(O(n)\) Steiner
routing nodes. It has depth \(O(\log n)\).

Consider running two steps of the binary search following the procedure above from some non-leaf node $s_0$. Denote the Steiner node pairs implementing the two required comparisons by $(s_{L_1},s_{R_1})$ and $(s_{L_2},s_{R_2})$, respectively. $s_0$ will have out edges to $s_{L_1}$ and $s_{R_1}$. Assuming we move left without loss of generality, $s_{L_1}$ has out edges to $s_{L_2}$ and $s_{R_2}$. 
One concern is that greedy routing may not properly progress through the two comparisons: if $\norm{s_{L_1} - q}_2\, < \norm{s_{R_1} - q}_2$, it will pop $s_{L_1}$ off the search queue in the first iteration, and add $s_{L_2}$ and $s_{R_2}$ to the queue. In the second step, we would like it to pop whichever of $s_{L_2}$ and $s_{R_2}$ is closest to $q$, thus implementing the second comparison. For this to happen, we need to ensure that it does not  `go back' and pop $s_{R_1}$. To do so, we use that the Steiner nodes in \Cref{lem:steiner-comparison-gadget} can be picked with arbitrarily large norms. By our assumption that $q$ has bounded norm (else, the $(R,c)$-near neighbor search data structure can simply return `FAIL'), we can pick $s_{L_1}$ and $s_{R_1}$  with sufficiently large norms so that they are always further from $q$ than $s_{L_2}$ and $s_{R_2}$. This forces greedy search to progress as desired -- popping either  $s_{L_2}$ or $s_{R_2}$ in the second step, if it popped $s_{L_1}$ in the first step.

Finally, to establish \Cref{lem:one-table-simulation-main}, we must consider $g(x)=(h_{a_1,b_1}(x),\ldots,h_{a_k,b_k}(x))$ with $k>1$ concatenated hash functions. We produce a routing tree for $g$ by appending one-coordinate routing trees. After the first $\ell$ coordinates, a non-failure leaf corresponds to a nonempty prefix bucket $X_I=\{x\in X:h_{a_j,b_j}(x)=i_j\text{ for }j=1,\ldots,\ell\}$, where $I=(i_1,\ldots,i_\ell)$ for $\ell<k$. This leaf becomes the root of a routing tree for $h_{a_{\ell+1},b_{\ell+1}}$ over $X_I$. The nonempty prefix buckets partition $X$, there are at most $n$ non-failure leaves among all trees at each level $\ell$. The total tree size and number of routing steps are therefore $O(kn)$ and $O(k\log n)$, respectively.

\subsection{Implementing Table Repetition}
\label{subsec:combining-tables}

\Cref{lem:one-table-simulation-main} implements a single LSH table via greedy search. To simulate the full procedure of \Cref{thm:datar-lshSimple}, we must perform lookups in $L$ independent tables. Doing so is simple. We let our global starting node $s$ be the starting node for the search graph simulating the first table via \Cref{lem:one-table-simulation-main}. Each terminal Steiner node $s_{g(q)}$, as well as the designated failure node $s_F$ for the first search graph are connected via an out edge to the starting node for a search graph simulating the second table, and so on. Dataset points have no out edges.

As we do within each search tree, we choose the lengths of  the Steiner points so that those in later trees are always closer to $q$ than those in earlier trees, assuming $\norm{q}_2\le Q$. Additionally, all data points are closer to $q$ than all Steiner points. Thus, each newly discovered data point is popped before the next table's root. After progressing through all search trees, one representative from each nonempty queried bucket has been popped. By \Cref{thm:datar-lshSimple}, if there is a data point within distance $R$ of $q$, at least one point within distance $cR$ has been popped with probability at least $1-\delta$. Overall, we have the following, proven in \Cref{app:all-table-simulation}:

\begin{theorem}[Fixed-radius Search Upper Bound]
\label{thm:fixed-radius-steiner}
Fix $R>0$, $c>1$, $\delta\in(0,1)$, and any $Q<\infty$. There is a constant $\rho$ depending only on $c$ with $\rho < \frac{1}{c-1}$ such that, for any $n$-point dataset $X\subseteq\R^d$, there is a randomized polynomial time algorithm constructing a directed graph $G$ on $X$ and $\tilde O(n^{1+\rho}\log(1/\delta))$ additional Steiner nodes, and a Steiner starting node $s$, with the following property. For every fixed $q$ with $\norm{q}_2\le Q$ and $\min_{x\in X}\norm{x-q}_2\le R$, generalized greedy search initialized at $s$ pops at least one point $x\in X$ with $\norm{x-q}_2\le cR$ within $T=\tilde O(n^{\rho} \log(1/\delta))$ steps, with probability at least $1-\delta$. All nodes in $G$ have out-degree at most two, so running generalized greedy search for $T$ steps requires $O(T)$ distance computations and takes $\tilde O(Td)$ time.
\end{theorem}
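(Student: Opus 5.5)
The plan is to instantiate \Cref{thm:datar-lshSimple} with parameters $w$, $k=O(\log n)$ and $L=O(n^{\rho}\log(1/\delta))$, draw the $L$ independent concatenated hash functions $g_1,\ldots,g_L$, and for each $g_\ell$ build the single-table graph $G_\ell$ from \Cref{lem:one-table-simulation-main}, with start node $s^{(\ell)}$, failure node $s_F^{(\ell)}$ and terminal nodes $s^{(\ell)}_I$. We then chain the tables: every terminal node and the failure node of $G_\ell$ gets one extra out-edge to $s^{(\ell+1)}$, and data points get no out-edges. The global start is $s=s^{(1)}$. Terminal nodes previously had out-degree one and failure nodes out-degree zero, so all out-degrees remain at most two. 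Counting gives $O(kn)$ Steiner nodes per table and $O(kLn)=\tilde O(n^{1+\rho}\log(1/\delta))$ in total. The construction is clearly polynomial time.

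The main step is an invariant showing that greedy search processes table $\ell$ exactly as in \Cref{lem:one-table-simulation-main} and then moves on to table $\ell+1$. For this we need a geometric separation. Fix radii $r_1>r_2>\cdots>r_L$ and a bound $M$ so that every Steiner node of table $\ell$ lies at distance roughly $r_\ell$ from $q$, with enough room that all of them are farther from $q$ than any Steiner node of table $\ell+1$. We also require every data point to be closer to $q$ than any Steiner node, for all $\norm{q}_2\le Q$. Concretely, $\norm{x-q}_2\le \max_j\norm{x_j}_2+Q$, and all Steiner norms are at least a large $B_L$ with $B_L-Q$ exceeding this bound.

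Using the ``arbitrarily large norm'' freedom in \Cref{lem:steiner-comparison-gadget}, we build the tables in reverse order $\ell=L,\ldots,1$. Each table's nodes are pushed outward until their minimum distance to any $q$ in the ball of radius $Q$ exceeds the maximum distance of every node in later tables. The same freedom is already used inside \Cref{lem:one-table-simulation-main} to order levels within a tree, so this only adds an outer layer of scaling. I expect this to be the main obstacle. Scaling must keep each gadget's bisector fixed, which means moving $s_L,s_R$ symmetrically along $a$. The within-tree ordering must also survive rescaling, so the norms have to be chosen consistently across the nested comparisons and across tables. This is doable because only finitely many nodes and a compact query ball are involved, but the bookkeeping is the delicate part.

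Given the invariant, induct on $\ell$. When $s^{(\ell)}$ is popped, the queue contains only leftover nodes of earlier tables, which are all farther than every node of table $\ell$, and possibly data points that were already popped. Data points have no out-edges and are closer than any Steiner node, so each one is popped immediately after discovery. Hence the run inside $G_\ell$ coincides with that of \Cref{lem:one-table-simulation-main}: after $O(k\log n)$ pops we reach $s_F^{(\ell)}$ or $s^{(\ell)}_{g_\ell(q)}$. In the latter case the representative of $X_{g_\ell(q)}$ is popped, and then $s^{(\ell+1)}$ is the closest remaining node. After $T=O(Lk\log n)=\tilde O(n^\rho\log(1/\delta))$ steps, one representative of every nonempty bucket $X_{g_\ell(q)}$ has been popped. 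This is exactly the set of points examined by the procedure of \Cref{thm:datar-lshSimple}, so with probability $1-\delta$ some popped $x$ satisfies $\norm{x-q}_2\le cR$. Since out-degrees are at most two, the run uses $O(T)$ distance computations and $\tilde O(Td)$ time.
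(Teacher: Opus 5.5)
Your proposal is correct and follows the paper's proof essentially step for step. Both chain the $L$ single-table graphs by adding an edge from every terminal node and from $s_F$ to the next table's root. Both place the Steiner nodes from the last table to the first, with an extra $2Q$ norm gap between tables and all norms above $M+2Q$, so data points are always closer to $q$ than any Steiner node. Both then argue that each table is traversed as in \Cref{lem:one-table-simulation-main} and that its representative is popped immediately. The ``rescaling'' bookkeeping you flag as the main obstacle does not arise in the paper. There, placement is a single global pass (tables last to first, and within each table deepest nodes to root), and each gadget's $\lambda$ is chosen once so that its norms exceed everything already placed by more than $2Q$.
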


 \subsection{Reduction from ANN to Fixed Radius Search}\label{sec:fixedReduction}
 
 We conclude by describing how to use the fixed-radius search result of \Cref{thm:fixed-radius-steiner} to give our final $c$-approximate nearest neighbor search result of \Cref{thm:ubIntro}. A full proof  is given in \Cref{app:steiner-lsh-to-ann}.

We first observe that queries with sufficiently large norm, relative to the size of the dataset, are easy -- allowing us to remove the bounded norm restriction of \Cref{thm:fixed-radius-steiner}. Let $R^*=\min_{x\in X}\norm{x-q}_2$ and let $\mathrm{diam}(X)=\max_{i\ne j}\norm{x_i-x_j}_2$. If $R^*\ge R_{\max}$ for $R_{\max}=\mathrm{diam}(X)/(c-1)$, then every $x\in X$ satisfies $\norm{q-x}_2\le\norm{q-x^*}_2+\norm{x^*-x}_2\le R^*+\mathrm{diam}(X)\le cR^*$. So every data point, including the default output of generalized greedy search, is a $c$-approximate nearest neighbor. We therefore only need to handle queries with $R^*<R_{\max}$. All such queries satisfy $\norm{q}_2\le Q$ for $Q=\max_{x\in X}\norm{x}_2+R_{\max}$. We will use $Q$ as the query norm bound in \Cref{lem:one-table-simulation-main} and \Cref{thm:fixed-radius-steiner}.

With this observation in place, we use a standard reduction from \cite{IM98} between fixed-radius and $c$-approximate search. We build fixed-radius data structures for a geometrically increasing sequence of radii $R_0<R_1<\cdots<R_m$, where $R_{i}=\gamma^i R_0$ for some constant $\gamma>1$. We return the point found by the smallest-radius data structure that does not return FAIL. The smallest radius with $R_i\ge R^*$ satisfies $R_i<\gamma R^*$. With probability at least $1-\delta$, the structure at this radius does not return FAIL. So the returned point comes from this radius or a smaller one, and is within distance $cR_i<\gamma c R^*$ of $q$. 

Since we only need to handle queries with $R^*<R_{\max}$, the largest radius, $R_m$, only needs to be at least $R_{\max}$. For the smallest radius, let $r=\min_{i\ne j}\norm{x_i-x_j}_2$ and set $R_0=r/(c+1)$. Radii smaller than $R_0$ are not needed. To see why, consider a query that is very close to a data point, with $R^*< R_0$. Every other data point is at distance at least $r-R^*> r-R_0 = cR_0$ from $q$, so the structure at radius $R_0$ must return $q$'s true nearest neighbor. Recalling that we let $\Delta=\mathrm{diam}(X)/r$ denote the aspect ratio of the dataset, we have $R_{\max}/R_0=2(c+1)\Delta/(c-1)$. So the number of radii is $m+1=O(\log_{\gamma}(R_{\max}/R_0))=O(\log\Delta)$.

To implement the multi-level procedure with generalized greedy search, we do not need to identify the smallest radius that does not return FAIL. We simply chain the graphs for all radii together, in the same way that we chained tables in \Cref{thm:fixed-radius-steiner}. The successful terminal nodes and the failure node of the last table at radius $R_i$ have out-edges to the root of the first table at radius $R_{i+1}$, and Steiner points at smaller radii are placed so that, for every query with $\norm{q}_2\le Q$, they are farther from $q$ than all Steiner points at larger radii. A single run of generalized greedy search pops the bucket representatives from every table at every radius. Since the search outputs the closest data point  popped, its output is at least as good as the point found at the smallest radius. 
% Only the success event at that one radius is needed. This radius is determined by $q$ and $X$, not by the random hash functions, so the failure probability stays at most $\delta$ with no union bound over radii. 
The graph size and query time grow by a factor of $O(\log \Delta)$, the number of radii.

One small issue still remains: the reduction to fixed-radius search costs a factor of $\gamma$ in the approximation guarantee. To absorb this loss, we build each fixed-radius structure with a slightly smaller approximation factor $c_0\in(1,c)$ and set $\gamma=c/c_0$, so that our final approximation factor is $\gamma c_0 = c$, as desired. Naively, we might think this would increase the number of tables, and thus the runtime of search. However, from \Cref{thm:datar-lshSimple} we have that the number of tables required by \cite{IM98} is $O(n^\rho \log(1/\delta))$ where $\rho$ is a constant depending on $c$ and strictly less than $\frac{1}{c-1}$. Because of this strict inequality, we can pick $c_0$ close enough to $c$, depending only on $c$, so that $\rho$ at approximation factor $c_0$ is still below $\frac{1}{c-1}$. Each structure therefore still uses $L=O(n^{\frac{1}{c-1}}\log(1/\delta))$ tables, and the runtime claims of \Cref{thm:ubIntro} hold as stated.

\section{Empirical Evaluation}
\label{sec:experiments}

We conclude with an initial exploration into the empirical implications of our theoretical results. In particular, we study how augmenting existing graph-based search methods via the addition of Steiner points into the search graph can improve query time vs. accuracy for nearest neighbor search.

\smallskip

\noindent
\textbf{Experimental Set Up.}
We evaluate a very simple approach: we construct Steiner points as $k$-means cluster centroids of the dataset vectors, with $k \in \{0.05n,\, 0.15n,\, 0.25n\}$. The centroids are appended to the original dataset and we build a Vamana (DiskANN) search graph \citep{SubramanyaDevvritKadekodi:2019} over the augmented point set using ParlayANN's \citep{ManoharShenBlelloch:2024} open source implementation. We use degree bound $R = 32$, pruning parameter $\alpha = 1.2$, and build beam width $L = 128$. We give additional results with $R = 64$ in \Cref{app:experiments}. Crucially, the Steiner points are added as nodes in the search graph, but are not returned in the output of greedy search. We compare the performance of these search graphs with that of a baseline graph, where no Steiner points are included.

In particular, we run beam search with varying beam widths on the augmented and baseline graphs, in order to produce a runtime vs. accuracy curve for each graph. We measure accuracy as Recall@$k$ for $k \in \{1,10,100\}$ against ground truth $k$-nearest neighbors computed via brute-force. Recall@$k$ is the fraction of the true $k$-nearest neighbors appearing in the algorithm's top-$k$ output, averaged over queries. It is a standard accuracy metric for ANN benchmarks \citep{AumullerBernhardssonFaithfull:2020,SimhadriAumullerDouze:2026}. For each beam width, we plot recall against the average number of total distance computations per query, which is a common proxy for the query runtime.
Full implementation details are given in \Cref{app:experiments}.

\smallskip

\noindent
\textbf{Results.} Accuracy vs. runtime curves for the GloVe-200 and NYTimes datasets \citep{AumullerBernhardssonFaithfull:2020} using cosine similarity for the distance metric are shown in \Cref{fig:main-results}. %Results on additional datasets and in additional settings are given in \Cref{app:experiments}. 
On GloVe-200, Steiner-augmented graphs outperform the baseline, except at Recall@100. E.g., the search graph with $.05n$ Steiner nodes 
achieves the same Recall@1 as the baseline using roughly $25\%$ fewer distance
computations. We see similar gains for NYTimes -- here the Steiner augmented graphs consistently outperform, even at Recall@100. In \Cref{app:experiments} we give results on further datasets and in further settings. 
We frequently, although not always, see improvements with Steiner augmented graphs. Nevertheless, our experiments suggest that further study of how Steiner points can be effectively incorporated into graph-based search is warranted. Even the very simple $k$-means-based approach we consider can sometimes lead to significant performance gains. 
While we do not necessarily recommend adopting this approach in general, it represents a promising initial baseline for Steiner point construction.

\begin{figure*}[h]
    \centering
    \begin{subfigure}[t]{\textwidth}
        \centering
        \includegraphics[width=\linewidth]{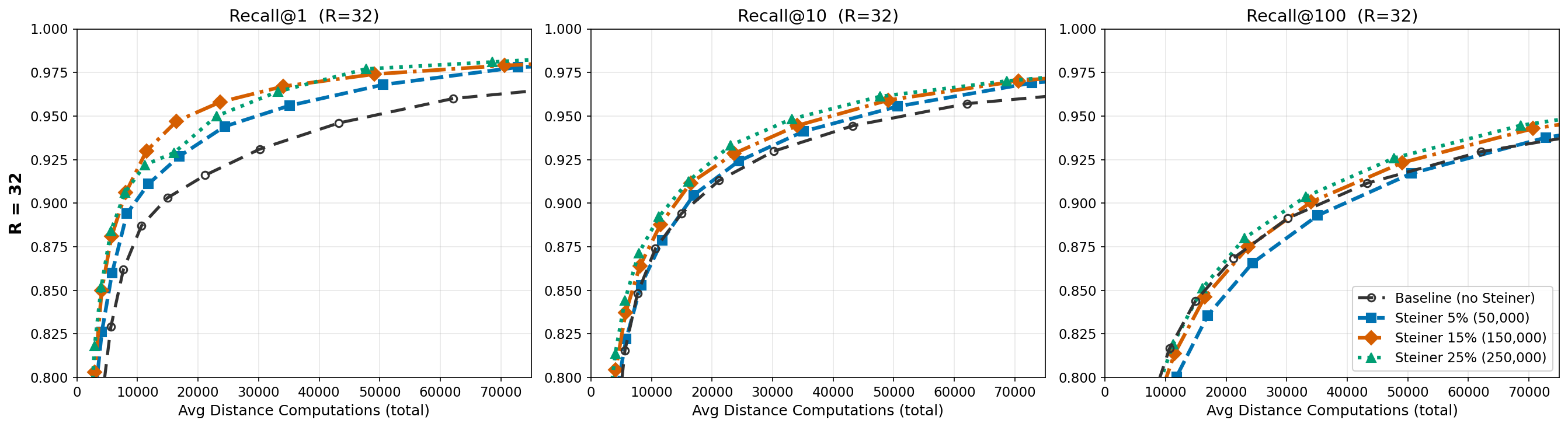}
        \caption{GloVe-200}
        \label{fig:glove200}
    \end{subfigure}
    \vspace{0.5em}
    \begin{subfigure}[t]{\textwidth}
        \centering
        \includegraphics[width=\linewidth]{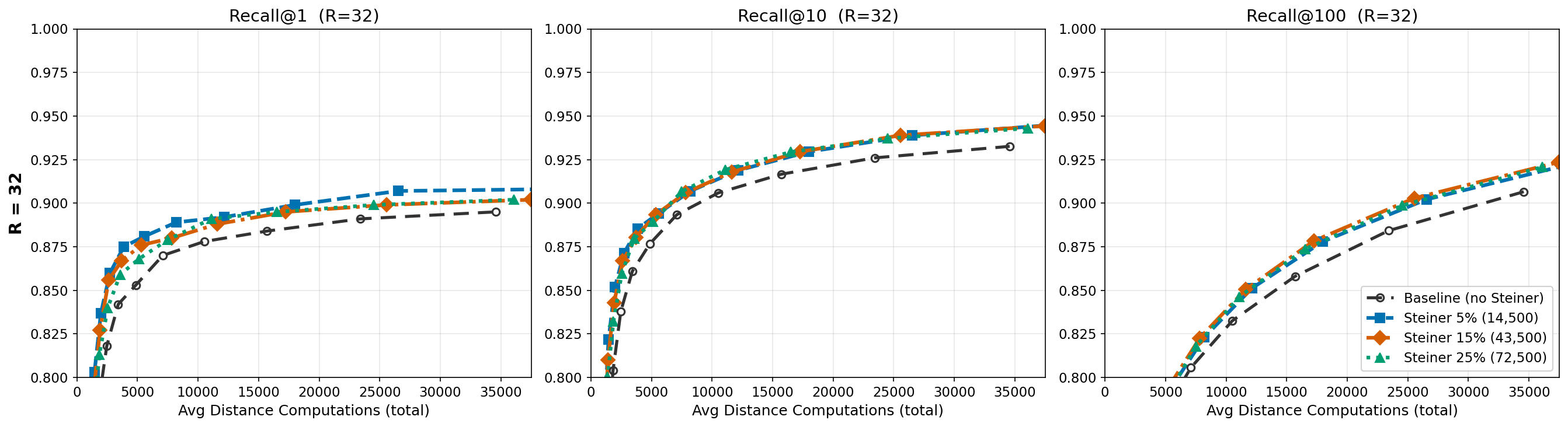}
        \caption{NYTimes}
        \label{fig:nytimes}
    \end{subfigure}
    \caption{Comparison of standard search graphs with Steiner augmented graphs in terms of recall vs.\ total distance computations, for nearest neighbor search under cosine similarity.}
    \label{fig:main-results}
\end{figure*}

%\section{Conclusion}

%\clearpage
\bibliographystyle{plainnat}
\bibliography{references}

%%%%%%%%%%%%%%%%%%%%%%%%%%%%%%%%%%%%%%%%%%%%%%%%%%%%%%%%%%%%

\appendix

\section{Low-Dimensional Lower Bound}
\label{sec:low-dim-single-source-lb}

We now prove \Cref{thm:lbLow}, which shows that, at least when using pure greedy search, graph-based methods with no Steiner points cannot match the sublinear query time guarantees of LSH, even in $O(\log n)$ dimensional Euclidean space. We restate a more complete version of the claim below.

\begin{theorem}[Theorem~\ref{thm:lbLow} restated]
\label{thm:single-source-low-dim-lb}
Let \(x_1,\ldots,x_n\) be drawn independently and uniformly from $\{-1,1\}^d$. With high probability, there is no search graph $G$ and starting node $s$ for $X$ for which pure greedy search successfully finds all in-dataset queries $q\in X$ in
\[
    o(dn^{1/2-\gamma}) \text{ time,}
\]
where,
$\gamma
    =
    \max\left\{
        \frac{\log\log n}{\log n},
        C\sqrt{\frac{\log n}{d}}
    \right\}
$
for a universal constant \(C>0\).
\end{theorem}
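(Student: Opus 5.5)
The plan is to show that the starting node $s$ must have out-degree $\Omega(n^{1/2-\gamma})$. Pure greedy search pops $s$ first and evaluates the distance from $q$ to every out-neighbor of $s$. That single first step already costs $\Omega(d\cdot \deg(s))$ time, which gives the bound. The argument follows the navigability lower bound of \cite{DiwanGouMusco:2024}, specialized to a single source.

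\textbf{Step 1: concentration of pairwise distances.} For random $x_i,x_j\in\{-1,1\}^d$ we have $\|x_i-x_j\|_2^2 = 2d-2\langle x_i,x_j\rangle$, and $\langle x_i,x_j\rangle$ is a sum of $d$ independent signs. By Hoeffding and a union bound over all pairs and triples, with high probability every inner product $\langle x_i,x_j\rangle$ with $i\neq j$ is $O(\sqrt{d\log n})$ in magnitude. Moreover, for each fixed $s$ and $q$, the quantities $\langle x_j, q\rangle-\langle s,q\rangle$ over the remaining $j$ behave like nearly independent Gaussians of variance about $d$.

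\textbf{Step 2: the structure forced by pure greedy search.} Since pure greedy search terminates as soon as the popped node fails to make progress, success on a query $q=x_i\neq s$ requires the following. Either $q\in\mathcal N_G(s)$, or some out-neighbor $y$ of $s$ satisfies $\|y-q\|_2<\|s-q\|_2$, i.e.\ $\langle y,q\rangle>\langle s,q\rangle$, and greedy then proceeds from the closest such $y$. The key counting step is the following. Let $N=\mathcal N_G(s)$ with $|N|=m$. For a query $q\notin N$ to be served, we need $\max_{y\in N}\langle y,q\rangle>\langle s,q\rangle$, and recursively the path must continue to $q$. Following \cite{DiwanGouMusco:2024}, I would show that each node $y$ can serve as a useful first hop only for queries $q$ whose inner product with $y$ is atypically large relative to $\langle s,q\rangle$. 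More precisely, I would bound the number of $q$ reachable through $y$ by an anti-concentration/overlap argument. If $\langle s,q\rangle\ge t:=c\sqrt{d\log n}$ for a suitable $c$, then with high probability no $y\neq q$ beats $s$ except for an $n^{-1+2\gamma'}$-fraction. The threshold is tuned so that roughly $n^{1/2+\gamma}$ queries $q$ have $\langle s,q\rangle$ in the top tail. Each of these is served only if $q\in N$ or $q$ is one of the few points beating $s$ for some $y\in N$.

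\textbf{Step 3: counting.} The queries with largest $\langle s,q\rangle$ number about $n^{1/2+\gamma}$ with high probability, by Gaussian tail estimates at level $\sqrt{d}\cdot\Theta(\sqrt{\log n})$. Each $y\in N$ makes progress toward at most $\approx n^{2\gamma}\cdot n^{1/2}$ of them, up to the $\log\log n$ slack. This slack is where the $\frac{\log\log n}{\log n}$ term in $\gamma$ comes from, while the $C\sqrt{\log n/d}$ term absorbs the Gaussian-approximation and discreteness errors for the $\pm1$ vectors. Together these force $m\cdot n^{1/2+o}\gtrsim n$, i.e.\ $m=\Omega(n^{1/2-\gamma})$. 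A union bound over the $n$ possible choices of $s$ makes the statement hold simultaneously for all graphs.

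\textbf{Main obstacle.} The hard part is Step 2/3: controlling, uniformly over all choices of $s$ and neighbor sets $N$, how many queries a single hop can serve. This requires a joint tail bound for pairs $(\langle y,q\rangle,\langle s,q\rangle)$ over random hypercube points with near-independence, and the exponents must be tuned so the two tails balance at $n^{1/2}$. I would first try to import the corresponding lemma of \cite{DiwanGouMusco:2024} directly, and then replace their averaging over all nodes with the single-source version.
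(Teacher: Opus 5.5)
There is a genuine gap in your Steps 2--3. The counting that is supposed to produce the bound is never established, and the numbers you do give contradict each other.

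With $\approx n^{1/2+\gamma}$ top-tail targets, and each out-neighbor making progress toward up to $\approx n^{1/2+2\gamma}$ of them, you only get the vacuous $m\gtrsim n^{-\gamma}$. Your final inequality $m\cdot n^{1/2+o(1)}\gtrsim n$ instead treats all $n$ queries as targets, with each neighbor serving $n^{1/2+o(1)}$ of them. That is false. For a fixed out-neighbor $v$, the pair $(\langle x_s,q\rangle,\langle x_v,q\rangle)$ is nearly symmetric over random $q$, so about half of all $n$ queries satisfy $\langle x_v,q\rangle>\langle x_s,q\rangle$. The first-hop constraint therefore says essentially nothing about typical queries.

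Your claim that for a top-tail $q$ few $y$ beat $s$ is also the wrong direction. It bounds the size of each set $S_q=\{y:\langle y,q\rangle>\langle x_s,q\rangle\}$. A lower bound on a set $N$ that hits every $S_q$ needs the opposite, per-neighbor bound: how many targets $q$ a single $v$ can serve.

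The missing idea is a simple reduction, which is what the paper uses.
\begin{itemize}
\item Take the targets to be exactly $\mathcal N(s)=\{q:\langle x_q,x_s\rangle\ge\tau\}$, with $\tau=c_h\sqrt{d\log n}$ tuned so that $|\mathcal N(s)|\ge c_0\sqrt n$ (not $n^{1/2+\gamma}$).
\item If $v$ makes first-hop progress toward $q\in\mathcal N(s)$, then $\langle x_v,x_q\rangle>\langle x_s,x_q\rangle\ge\tau$. Hence $q\in\mathcal N(s)\cap\mathcal N(v)$, so $v$ serves at most $|\mathcal N(s)\cap\mathcal N(v)|$ targets.
\item The pairwise overlap bound of \cite{DiwanGouMusco:2024} (their Claim 9, boosted to high probability) gives $|\mathcal N(i)\cap\mathcal N(j)|\le C_0\max\{\log n,n^{C_1\sqrt{\log n/d}}\}=C_0n^{\gamma}$ for all distinct pairs simultaneously.
\item Therefore $\deg^+(s)\ge(c_0\sqrt n-1)/(C_0n^{\gamma})$.
\end{itemize}

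The size and overlap bounds hold for all indices and all pairs at once. The degree bound is then a deterministic covering argument valid for every $G$ and every $s$, which removes your ``main obstacle'' of uniformity over neighbor sets $N$. You also need the points to be distinct (a union bound, since $d\ge C_2\log n$), so that greedy search cannot succeed on $q\neq s$ by stopping at $s$.

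Two smaller corrections:
\begin{itemize}
\item The $\log\log n/\log n$ term in $\gamma$ comes from the additive $\log n$ in the overlap bound, since $\log n=n^{\log\log n/\log n}$.
\item The $C\sqrt{\log n/d}$ term comes from the correlation $\langle x_s,x_v\rangle=O(\sqrt{d\log n})$ between the two centers, which inflates $\Pr[\langle x_s,q\rangle\ge\tau,\ \langle x_v,q\rangle\ge\tau]$ above $n^{-1}$. It is not a Gaussian-approximation error.
\end{itemize}
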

Note that, if $d \ge \frac{C^2}{\beta^2}\log n$ for any fixed constant $\beta > 0$, then
$
    C\sqrt{\frac{\log n}{d}}
    \le
     {\beta}.
$
Also, for all sufficiently large \(n\),
$
    \frac{\log\log n}{\log n}\le  {\beta} .
$
Therefore \(\gamma\le \beta\), so Theorem~\ref{thm:single-source-low-dim-lb} implies a runtime lower bound of $o(dn^{1/2-\beta})$ for any fixed constant $\beta$ when $d = O(\log n)$. This is how the simplified version of the result is stated in \Cref{thm:lbLow}.

Our proof of \Cref{thm:single-source-low-dim-lb} is based on a recent lower bound of \cite{DiwanGouMusco:2024} on the minimum average degree required to construct a ``navigable graph'' for $X$. While navigability of $G$ is a sufficient condition for pure greedy search to find any query $q \in X$, it is not a necessary condition. As such, our lower bound can be viewed as strengthening the result from \cite{DiwanGouMusco:2024}.

To prove Theorem~\ref{thm:single-source-low-dim-lb}, we define the 
``fixed-radius near-neighborhood'', $\mathcal{N}(j)$, of a point, $x_j$, as in \cite{DiwanGouMusco:2024}. For a threshold
\(\tau>0\) to be determined later, define
\begin{align}\label{eq:fixedRadius}
    \mathcal{N}(j)
    :=
    \{i\in X : \langle x_i,x_j\rangle \ge \tau\}.
\end{align}

One can argue that for a random dataset, the near-neighborhoods $\mathcal N(j)$ behave roughly like random sets: for an appropriate setting of $\tau$, we can ensure that each has size $\Theta(\sqrt{n})$, but the overlap between any two neighborhoods is small.
\begin{lemma}[Random hypercube neighborhood bounds]
\label{lem:random-hypercube-neighborhoods}
There are universal constants \(c_0,C_0,C_1,C_2>0\) and a threshold
\[
    \tau = c_h\sqrt{d\log n}
\]
with \(c_h\in[1/3,1]\), such that the following holds. Let
\(x_1,\ldots,x_n\) be drawn independently and uniformly from \(\{-1,1\}^d\),
where \(d\ge C_2\log n\). With high probability, the following three properties
hold simultaneously:
\begin{enumerate}
    \item 
    $x_1,\ldots,x_n\text{ are unique. I.e., }i \neq j \implies x_i \neq x_j.$
\item 
    $|\mathcal{N}(j)|\ge c_0\sqrt n
    \qquad\text{for every }j\in[n].$
\item $
    |\mathcal{N}(i)\cap \mathcal{N}(j)|
    \le
    C_0\max\left\{\log n,n^{C_1\sqrt{\log n/d}}\right\}
    \qquad\text{for all distinct }i,j\in[n].$
    \end{enumerate}
\end{lemma}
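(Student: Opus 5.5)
We prove the three properties separately and combine them with a union bound. All three follow from sharp estimates for sums of independent Rademacher variables. For distinct $i,j$, the inner product $\langle x_i,x_j\rangle$ is a sum of $d$ independent uniform $\pm1$ signs. For three distinct points $i,j,k$ we will also use the joint distribution of $(\langle x_k,x_i\rangle,\langle x_k,x_j\rangle)$ conditioned on $x_i,x_j$.

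\textbf{Property 1 (uniqueness).} For $i\ne j$, $\Pr[x_i=x_j]=2^{-d}$. A union bound over the $n^2$ pairs gives failure probability $n^22^{-d}$, which is $n^{-\Omega(1)}$ once $d\ge C_2\log n$ with $C_2>3/\log 2$.

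\textbf{Property 2 (neighborhood sizes).} Define $p(\tau)=\Pr[\sum_{\ell=1}^d \epsilon_\ell\ge\tau]$. By continuity and monotonicity in $c_h$, we choose $c_h$ so that $p(\tau)\approx 2c_0'/\sqrt n$. Such a $c_h$ exists in $[1/3,1]$ by two-sided Gaussian-type tail bounds. The upper bound is Hoeffding, $p\le e^{-\tau^2/(2d)}$, which gives $p\le n^{-1/2}$ at $c_h=1$. The lower bound is a reverse Chernoff/anti-concentration estimate for binomials, $p\ge \exp(-(1+o(1))\tau^2/(2d))/\mathrm{poly}\log n$, valid in the moderate-deviation range $\tau\ll d$. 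This range holds because $d\ge C_2\log n$ gives $\tau/d\le 1/\sqrt{C_2}$. At $c_h=1/3$ it gives $p\gg n^{-1/2}$.

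Fix $j$ and condition on $x_j$. The events $\{\langle x_i,x_j\rangle\ge\tau\}$ for $i\ne j$ are then independent, each with probability $p$, and $j\in\mathcal N(j)$ automatically since $d\ge\tau$. So $|\mathcal N(j)|$ stochastically dominates $\mathrm{Bin}(n-1,p)$, whose mean is $\Theta(\sqrt n)$. A multiplicative Chernoff bound gives $|\mathcal N(j)|\ge c_0\sqrt n$ with probability $1-e^{-\Omega(\sqrt n)}$, and we union bound over $j$.

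\textbf{Property 3 (pairwise overlaps).} This is the main obstacle. Fix $i\ne j$ and condition on $x_i,x_j$. Write $m=\langle x_i,x_j\rangle$. By Hoeffding and a union bound, with high probability $|m|\le 2\sqrt{d\log n}$ for all pairs.

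For $k\notin\{i,j\}$, split the coordinates into the set $A$ where $x_i,x_j$ agree, of size $(d+m)/2$, and the set $B$ where they disagree. Let $U=\sum_{\ell\in A}x_k[\ell]x_i[\ell]$ and $V=\sum_{\ell\in B}x_k[\ell]x_i[\ell]$. Then $\langle x_k,x_i\rangle=U+V$ and $\langle x_k,x_j\rangle=U-V$, with $U,V$ independent. Both inner products being at least $\tau$ forces $U\ge\tau$ and $|V|\le U-\tau$. By Hoeffding on $U$ over $|A|\le d/2+\sqrt{d\log n}$ coordinates,
\[
q:=\Pr[k\in\mathcal N(i)\cap\mathcal N(j)]\le \exp\!\left(-\frac{\tau^2}{2|A|}\right)\le \exp\!\left(-\frac{\tau^2}{d}\Big(1-O\big(\sqrt{\log n/d}\big)\Big)\right).
\]
Meanwhile $p\approx\exp(-\tau^2/(2d))$ up to polylog factors, so $q\le p^2\, n^{O(\sqrt{\log n/d})}\,\mathrm{poly}\log n=n^{-1+O(\sqrt{\log n/d})}$. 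Getting this exponent right is the delicate step. It uses the reverse tail bound to relate $\tau^2/d$ to $\log n$ tightly: because the lower bound on $p$ used to pick $c_h$ is only tight up to $1+o(1)$ in the exponent, we have $\tau^2/d=(1\pm o(1))\log n$. We absorb the polylog slack into the $\max\{\log n,\cdot\}$ term, or into $C_1$ using $d\le$ poly cases.

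Conditionally on $x_i,x_j$, the indicators over $k$ are independent, so the overlap is at most $2+\mathrm{Bin}(n,q)$, whose mean is $\mu\le n^{C_1'\sqrt{\log n/d}}$. A Chernoff bound of the form $\Pr[\mathrm{Bin}\ge t]\le 2^{-t}$ for $t\ge 2e\mu$ then gives overlap at most $C_0\max\{\log n,n^{C_1\sqrt{\log n/d}}\}$ with probability $1-n^{-3}$. A union bound over all $n^2$ pairs completes the proof, and a final union bound over the three properties gives the lemma.
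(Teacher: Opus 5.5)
Your route differs from the paper's. The paper proves only uniqueness directly and imports Properties 2 and 3 from Claims 8 and 9 of Diwan, Gou and Musco, observing only that their per-$j$ and per-pair failure probabilities are small enough to union bound. Reproving them is fine in principle. However, your Property 3 argument has a genuine gap at the step $q\le p^2\,n^{O(\sqrt{\log n/d})}\,\mathrm{poly}\log n=n^{-1+O(\sqrt{\log n/d})}$, where a polylogarithmic factor silently disappears.

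This factor is forced by your method, not by your choice of $c_h$. For $d\gg\log^2 n$ and $c_h\in[1/3,1]$ one has $p=\Theta(e^{-\tau^2/(2d)}/\sqrt{\log n})$. So Property 2, which needs $p\gtrsim n^{-1/2}$, forces $e^{-\tau^2/d}\gtrsim \log n/n$. Your Hoeffding bound $e^{-\tau^2/(2|A|)}$ is at least this large whenever $|A|\ge d/2$. For pairs with $m\ge\tau$, which Property 2 shows are plentiful, it is at least of order $\frac{\log n}{n}\,n^{c\sqrt{\log n/d}}$ for a constant $c>0$. The loss comes from two factors of $\sqrt{\log n}$:
\begin{itemize}
\item Hoeffding discards the Mills-ratio prefactor in the tail of $U$.
\item You drop the constraint $|V|\le U-\tau$. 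Given $U\ge\tau$, the overshoot $U-\tau$ is typically of order $\sqrt{d/\log n}$, while $V$ has spread $\sqrt{d/2}$.
\end{itemize}

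Contrary to your remark, this slack cannot be absorbed into $C_0\max\{\log n,n^{C_1\sqrt{\log n/d}}\}$. A product $\log n\cdot n^{a}$ is not $O(\max\{\log n,n^{C_1 a}\})$ when $n^{a}$ lies between a constant and a polylog. Concretely, take $d=\log^3 n/(\epsilon\log\log n)^2$ with $\epsilon C_1<1$, so that $n^{\sqrt{\log n/d}}=(\log n)^{\epsilon}$. The lemma then asserts overlaps of size $O(\log n)$, while your argument only gives $O((\log n)^{1+c\epsilon})$.

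To prove the statement as written, you need a sharp bivariate estimate $q\le C\,p^2\,n^{C\sqrt{\log n/d}}$ with matching prefactors. One way is a moderate-deviation or local-CLT computation of $\Pr[U\ge\tau,\ |V|\le U-\tau]$, together with a sharp lower bound on $p$. Alternatively, your argument does establish the weaker bound $C_0\log n\cdot n^{C_1\sqrt{\log n/d}}\le C_0 n^{2\gamma}$. That still suffices for the source-degree lower bound with $\gamma$ doubled, but it is not the lemma.

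Two smaller slips are easily repaired:
\begin{itemize}
\item $p$ is a step function of $c_h$, not a continuous one. Each jump changes it only by a factor $1+O(C_2^{-1/2})$, so you can still hit $\Theta(n^{-1/2})$.
\item $\Pr[|m|\ge 2\sqrt{d\log n}]\le 2n^{-2}$ per pair does not survive a union bound over $\binom n2$ pairs. Use $3\sqrt{d\log n}$ instead.
\end{itemize}
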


\begin{proof}
The first claim follows from a simple union bound -- the points are unique except with probability ${n \choose 2}/2^d = 1/\poly(n)$ when $d > C_2 \log n$ for sufficiently large $C_2$. The next two claims follow from the random hypercube neighborhood analysis of
~\cite{DiwanGouMusco:2024}. Their fixed-radius neighborhood is
defined as
\[
    O_j=\{x_i:\langle x_i,x_j\rangle\ge c_h\sqrt{d\log n}\},
\]
which is exactly our \(\mathcal{N}(j)\) with
\(\tau=c_h\sqrt{d\log n}\). Their Claim~8 shows that, for this threshold,
\[
    |O_j|\ge \sqrt n/6
\]
simultaneously for all \(j\in[n]\), and their Claim~9 shows that
\[
    |O_i\cap O_j|
    \le
    10\max\{\log n,n^{c\sqrt{\log n/d}}\}
\]
simultaneously for all distinct \(i,j\in[n]\), for a universal constant \(c\).

Their statements are written with constant success probability. However, the
proof of Claim~8 gives failure probability \(\exp(-\Omega(\sqrt n))\) for each
fixed \(j\), and then union bounds over all \(j\). The proof of Claim~9 gives
polynomially small failure probability for each fixed pair \(i,j\), and then
union bounds over all \(O(n^2)\) pairs. Therefore, by increasing the absolute
constants in the dimension lower bound and in the intersection bound, the same
argument gives the high-probability version stated here.
\end{proof}

We next argue that, in order for greedy search to succeed on all in-dataset queries, the source node $s$ needs an out-neighbor within the near-neighborhood of \emph{all its own near-neighbors}. Using the bounds of \Cref{lem:random-hypercube-neighborhoods}, this allows us to show that $s$ needs degree $\Omega(n^{1/2-\gamma})$, and thus prove the runtime bound of \Cref{thm:single-source-low-dim-lb}.
\begin{lemma}[First-hop covering constraint]
\label{lem:first-hop-cover}
Consider a dataset $X$ whose points are unique -- i.e., $i \neq j \implies x_i \neq x_j$. 
Fix a directed search graph $G$ and starting node $s$, for which pure greedy search on $G$ starting from $s$ succeeds for all queries  $q \in X$. Then,  letting $N_G^+(s)$ denote the out-neighborhood of $s$ in $G$, %there exists
%\(v\in N_G^+(s)\) such that
% \[
%     x_v\in \mathcal{N}(y).
% \]
% Equivalently,
\[
    N_G^+(s)\cap \mathcal{N}(y)\neq \emptyset
    \qquad\text{for every }y\in \mathcal{N}(s)\setminus\{s\}.
\]
\end{lemma}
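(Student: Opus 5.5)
We argue by contradiction, looking at the first step of pure greedy search on the query $q = x_y$, for any fixed $y\in \mathcal{N}(s)\setminus\{s\}$. Since the points are unique, $y \neq s$ gives $x_y \neq x_s$. So greedy search cannot succeed at step zero: the initial guess $\tilde x = x_s$ is not $q$. It must therefore make progress in its first iteration. Pure greedy search pops $s$ and inserts its out-neighbors. It then pops the out-neighbor $v \in N_G^+(s)$ closest to $q$, and terminates immediately unless $\norm{x_v - q}_2 < \norm{x_s - q}_2$. If $N_G^+(s)$ is empty, search halts at $s$ and fails, which already contradicts success. So success on query $x_y$ forces the existence of some $v\in N_G^+(s)$ with $\norm{x_v-x_y}_2<\norm{x_s-x_y}_2$.

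Next we translate this distance inequality into an inner product inequality, using that all points lie on the hypercube. Every $x\in\{-1,1\}^d$ has $\norm{x}_2^2=d$, so $\norm{x_v-x_y}_2^2 = 2d-2\langle x_v,x_y\rangle$. The strict distance inequality is therefore equivalent to $\langle x_v,x_y\rangle > \langle x_s,x_y\rangle$. Since $y\in\mathcal{N}(s)$, \eqref{eq:fixedRadius} and the symmetry of the inner product give $\langle x_s,x_y\rangle\ge\tau$. Hence $\langle x_v,x_y\rangle>\tau$, so $v\in\mathcal{N}(y)$. This shows $N_G^+(s)\cap\mathcal{N}(y)\neq\emptyset$, as claimed.

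The main subtlety is the dependence on equal norms. The lemma as stated only assumes uniqueness, but the definition of $\mathcal{N}(\cdot)$ through inner products matches distances only when all norms agree. I would state explicitly that we work in the hypercube setting of \Cref{lem:random-hypercube-neighborhoods}, where all norms equal $\sqrt d$. More generally, the argument goes through for any dataset of equal-norm points.

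A second point to pin down is the semantics of pure greedy search in \Cref{alg:gen_greedy_search}. We must confirm that the first popped neighbor is the closest out-neighbor of $s$. We must also confirm that failing to strictly improve on $\tilde x=x_s$ ends the search with output $x_s\ne q$. Both follow from the termination rule $D(x_j,q)\ge D(\tilde x,q)$.
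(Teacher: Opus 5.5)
Your proof is correct and takes essentially the same route as the paper: query $x_y$, note that the search cannot stop at $x_s\neq x_y$, so some out-neighbor $v$ of $s$ must be strictly closer to $x_y$, then use $\norm{a-b}_2^2=2d-2\langle a,b\rangle$ to get $\langle x_v,x_y\rangle>\langle x_s,x_y\rangle\ge\tau$. Your caveat about equal norms is apt, since the paper's proof also relies implicitly on the hypercube setting of \Cref{lem:random-hypercube-neighborhoods} even though the lemma statement assumes only uniqueness.
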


\begin{proof}
Fix \(y\in \mathcal{N}(s)\setminus\{s\}\). By the definition of $\mathcal{N}(s)$ in \eqref{eq:fixedRadius}, we have 
$\langle x_y,x_s\rangle \ge \tau.$
Since the points in $X$ are unique, greedy search in \(G\) starting from \(s\) must eventually reach \(x_y\) on query $x_y$. In particular, it cannot stop at the
initial vertex \(x_s \neq x_y\). Hence \(x_s\) must have an out-neighbor
\(v\in N_G^+(s)\) satisfying
\[
    \|x_v-x_y\|_2 < \|x_s-x_y\|_2.
\]
Otherwise the algorithm will return \(x_s\).
Using
$
    \|a-b\|_2^2 = 2d-2\langle a,b\rangle
$
for \(a, b\in\{-1,1\}^d\), this inequality is equivalent to
$    \langle x_v,x_y\rangle > \langle x_s,x_y\rangle.$
Combining this with \(\langle x_s,x_y\rangle\ge \tau\), we get
$\langle x_v,x_y\rangle > \tau.$
Therefore \(x_v\in \mathcal{N}(y)\), and so  $N_G^+(s)\cap \mathcal{N}(y)\neq \emptyset$ as claimed.
\end{proof}

We now combine \Cref{lem:random-hypercube-neighborhoods,lem:first-hop-cover} to argue that the starting vertex must have high degree. This will immediately imply the runtime lower bound of \Cref{thm:single-source-low-dim-lb}, since greedy search at least needs to compute the distance from the query $q$ to any neighbor of the starting vertex.

\begin{lemma}[Source degree lower bound]
\label{lem:source-degree-lb}
Let $X$ be as in Lemma~\ref{lem:random-hypercube-neighborhoods}, and assume the three events in the lemma hold.
Fix a directed search graph $G$ and starting node $s$, for which pure greedy search on $G$ starting from $s$ succeeds for all queries  $q \in X$. Then the out-degree of $s$ in $G$ satisfies:

\[
    \deg_G^+(s)
    =
    \Omega(n^{1/2-\gamma}),\quad \text{ where }    \gamma
    =
    \max\left\{
        \frac{\log\log n}{\log n},
        C_1\sqrt{\frac{\log n}{d}}
    \right\}.
\]
\end{lemma}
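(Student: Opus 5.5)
This is a double-counting argument. By \Cref{lem:first-hop-cover}, every $y\in\mathcal{N}(s)\setminus\{s\}$ has some out-neighbor $v\in N_G^+(s)$ with $v\in\mathcal{N}(y)$. Since the inner product is symmetric, $v\in\mathcal{N}(y)$ is the same as $y\in\mathcal{N}(v)$. Hence
\[
    \mathcal{N}(s)\setminus\{s\} \subseteq \bigcup_{v\in N_G^+(s)} \bigl(\mathcal{N}(v)\cap\mathcal{N}(s)\bigr),
\]
so the out-neighbors of $s$ cover the near-neighborhood of $s$ through their own near-neighborhoods. A union bound then gives
\[
    |\mathcal{N}(s)|-1 \le \sum_{v\in N_G^+(s)} |\mathcal{N}(v)\cap\mathcal{N}(s)|.
\]

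To bound each term, I split on whether $v=s$. If $v\neq s$, property 3 of \Cref{lem:random-hypercube-neighborhoods} gives
\[
    |\mathcal{N}(v)\cap\mathcal{N}(s)|\le C_0\max\{\log n, n^{C_1\sqrt{\log n/d}}\}.
\]
A self-loop $v=s$ never helps greedy search, because it cannot strictly decrease the distance. So I may assume $s\notin N_G^+(s)$, or simply remove any self-loop from $G$ first. Combining this with property 2, $|\mathcal{N}(s)|\ge c_0\sqrt n$, yields
\[
    \deg_G^+(s) \ge \frac{c_0\sqrt n - 1}{C_0\max\{\log n, n^{C_1\sqrt{\log n/d}}\}}.
\]

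The last step converts the denominator into $n^{\gamma}$. Since $\log n = n^{\log\log n/\log n}$, we have
\[
    \max\{\log n, n^{C_1\sqrt{\log n/d}}\} = n^{\gamma},
\]
with $\gamma$ exactly as in the statement (logs in a consistent base). This gives $\deg_G^+(s)=\Omega(n^{1/2-\gamma})$, where the $-1$ is absorbed for large $n$.

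The main subtlety is the symmetry step. I need that $x_v\in\mathcal{N}(y)$ from \Cref{lem:first-hop-cover} also places $y$ in $\mathcal{N}(v)\cap\mathcal{N}(s)$. This holds because $\mathcal{N}$ is defined by a symmetric inner-product threshold, and because $y\in\mathcal{N}(s)$ by choice. Membership of $s$ itself in $\mathcal{N}(s)$ (true, since $\langle x_s,x_s\rangle = d\ge\tau$) is handled by the $-1$.
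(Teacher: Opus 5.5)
Your proof is correct and is essentially the paper's argument: every $y\in\mathcal{N}(s)\setminus\{s\}$ is covered by some out-neighbor $v$, symmetry of the inner product puts $y$ in $\mathcal{N}(s)\cap\mathcal{N}(v)$, and claims 2 and 3 of \Cref{lem:random-hypercube-neighborhoods} give $\deg_G^+(s)\cdot C_0\max\{\log n,n^{C_1\sqrt{\log n/d}}\}\ge c_0\sqrt n-1$. Your explicit handling of a possible self-loop at $s$ is a small tidy-up of the paper's ``since $v\neq s$''; it is also automatic, because the covering neighbor from \Cref{lem:first-hop-cover} strictly decreases the distance to $x_y$ and so cannot be $s$.
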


\begin{proof}
We can apply Lemma \Cref{lem:first-hop-cover} to $X$ since the first claim of\Cref{lem:random-hypercube-neighborhoods} ensures that the points in $X$ are unique.
By Lemma~\ref{lem:first-hop-cover}, every point
\(y \in \mathcal{N}(s)\setminus\{s\}\) must be ``covered'' by at least one
out-neighbor of \(s\). I.e., we must have some \(v\in N_G^+(s)\) with  \(v\in \mathcal{N}(y)\) to ensure that $N_G^+(s)\cap \mathcal{N}(y)\neq \emptyset.$

We upper bound how many points in  \(\mathcal{N}(s)\setminus\{s\}\) a single
out-neighbor can cover. Fix \(v\in N_G^+(s)\). Any $y$ covered by \(v\) has $v \in \mathcal{N}(y)$ -- i.e., $\langle x_v, x_y \rangle \ge \tau$ by the definition of $\mathcal{N}(y)$ in \eqref{eq:fixedRadius}. By symmetry of the inner product, we thus also have $y \in \mathcal{N}(v)$.
Therefore, we have $y \in |\mathcal{N}(s)\cap \mathcal{N}(v)|.$

Since \(v\neq s\), the third claim of Lemma~\ref{lem:random-hypercube-neighborhoods}
gives
\[
    |\mathcal{N}(s)\cap \mathcal{N}(v)|
    \le
    C_0\max\left\{\log n,n^{C_1\sqrt{\log n/d}}\right\}.
\]
I.e., any \(v\in N_G^+(s)\) covers at most $C_0\max\left\{\log n,n^{C_1\sqrt{\log n/d}}\right\}$ vertices in $\mathcal N(s) \setminus \{ s\}.$
On the other hand, the second claim of Lemma~\ref{lem:random-hypercube-neighborhoods} gives $|\mathcal{N}(s)\setminus\{s\}| \ge c_0\sqrt n-1.$
Thus, we must have:
\[
    \deg_G^+(s)\cdot
    C_0\max\left\{\log n,n^{C_1\sqrt{\log n/d}}\right\}
    \ge
    c_0\sqrt n-1.
\]
I.e.,
\[
    \deg_G^+(s)
    =
    \Omega\left(
        \frac{\sqrt n}
        {\max\{\log n,n^{C_1\sqrt{\log n/d}}\}}
    \right) = \Omega(n^{1/2-\gamma})
\]
for $
    \gamma
    =
    \max\left\{
        \frac{\log\log n}{\log n},
        C_1\sqrt{\frac{\log n}{d}}
    \right\}.$
\end{proof}

\begin{proof}[Proof of Theorem~\ref{thm:single-source-low-dim-lb}]
Drawing \(x_1,\ldots,x_n\) independently and uniformly from \(\{-1,1\}^d\), with high probability, all three claims of Lemma~\ref{lem:random-hypercube-neighborhoods} hold. Thus, with high probability, by \Cref{lem:source-degree-lb}, for any search graph $G$ with starting node $s$ for which pure greedy search succeeds on all in-dataset queries, we have $\deg_G^+(s) =  \Omega(n^{1/2-\gamma}).$ I.e., greedy search must at least compute the distance from $q$ to $\Omega(n^{1/2-\gamma})$ points in the first step of the algorithm, requiring $\Omega(dn^{1/2-\gamma})$ time (even to read those datapoints).
\end{proof}

\section{Omitted Proofs for LSH Simulation with Steiner Points}
\label{app:steiner}

We now give complete proofs of the results in \Cref{sec:steiner}, beginning with \Cref{thm:datar-lshSimple}.

\subsection{Fixed Radius LSH Guarantee}
\begin{proof}[Proof of \Cref{thm:datar-lshSimple}]
Suppose an LSH function collides with probability at least $p_1$ for points within distance $R$ of $q$, and at most $p_2$ for points farther than $cR$. The LSH scheme of \cite{IM98} uses $k=O(\log n/\log(p_1/p_2))$ hash functions per table and $L=O(n^\rho\log(1/\delta))$ tables, with $\rho=\frac{\log(1/p_1)}{\log(p_1/p_2)}$. Our main challenge is thus to argue that, for a suitable bucket width, the hash function of \cite{DIIM04} has $\rho< \frac{1}{c-1}$, with $p_1$ and $p_2$ depending only on $c$. 

\cite{DIIM04} analyze the closely related quantity $\rho'=\frac{\log(1/p_1)}{\log(1/p_2)}$. $\rho=\frac{\rho'}{1-\rho'}$, so $\rho<1/(c-1)$ if and only if $\rho'<1/c$. They show analytically that $\rho'\le(1+\epsilon)/c$ for any $\epsilon>0$ and a sufficiently large bucket width, and numerically that $\rho'<1/c$ for $c\in(1,10]$. We give a short proof that $\rho'<1/c$, and thus $\rho<1/(c-1)$, for every $c>1$. Combined with \cite{IM98}, this proves the claim. 
% The strictness of the inequality is also used in the reduction from ANN to fixed-radius search.

\label{app:steiner-lsh-guarantee}
For search radius $R$, write the bucket width as $w=WR$, and let $Z$ be a standard Gaussian random variable. As shown in \cite{DIIM04}, for two points at distance $tR$, a single hash function collides with probability
%w = WR = r. tR = c. int_0^WR 1/(tR) f(|Z|/tR) (1-t/(WR))
%0 when |Z| > W/t.

\[
    p(t)=\E\left[\max\left\{1-\frac{t|Z|}{W},0\right\}\right],
\]
which decreases with $t$. Write $\alpha=\E|Z|=\sqrt{2/\pi}$. Without the truncation at zero, we would have $p(t) = 1-\alpha t/W$. The truncation changes this expectation only when $|Z|>W/t$. Thus
\begin{align}\label{eq:prob1}
    p(t)=1-\frac{\alpha t}{W}
       +\E\left[\left(\frac{t|Z|}{W}-1\right)
           \mathbf{1}_{\{|Z|>W/t\}}\right].
\end{align}
We have 
\begin{align*}
0 \le E\left[\left(\frac{t|Z|}{W}-1\right)
           \mathbf{1}_{\{|Z|>W/t\}}\right] &\le \frac{t}{W} E\left[|Z| \cdot 
           \mathbf{1}_{\{|Z|>W/t\}}\right]\\
           &= \frac{2t}{W\sqrt{2\pi}}\int_{W/t}^{\infty}z e^{-z^2/2}\,dz
      =\frac{\alpha t}{W}e^{-W^2/(2t^2)}.
\end{align*}
For every fixed $t>0$, this error decays faster than any inverse power of $W$. E.g.,, combined with \eqref{eq:prob1}, we have
\[
    p(t)=1-\frac{\alpha t}{W}+o(W^{-3}).
\]
We next use the Taylor expansion $\log(1/(1-u))=u+u^2/2+O(u^3)$ as $u\to0$, taking $u=1-p(t)=\alpha t/W+o(W^{-3})$. This gives that for large enough $W$,
\[
    \log\frac{1}{p(t)}=\frac{\alpha t}{W}
       +\frac{\alpha^2t^2}{2W^2}+O(W^{-3}).
\]
Finally, we let $p_1=p(1)$ and $p_2=p(c)$, and  apply the expansion at these points. This gives:
\[
\begin{aligned}
    \rho'=\frac{\log(1/p_1)}{\log(1/p_2)}
    &=\frac{\alpha/W+\alpha^2/(2W^2)+O(W^{-3})}
            {\alpha c/W+\alpha^2c^2/(2W^2)+O(W^{-3})} \\
    &=\frac{1}{c}\cdot
        \frac{1+\alpha/(2W)+O(W^{-2})}
             {1+\alpha c/(2W)+O(W^{-2})} \\
    &=\frac{1}{c}\left(1-\frac{\alpha(c-1)}{2W}+O(W^{-2})\right).
\end{aligned}
\]
The last equality uses $(1+u)^{-1}=1-u+O(u^2)$. For fixed $c>1$, the negative term $-\alpha(c-1)/(2W)$ dominates the $O(W^{-2})$ remainder when $W$ is sufficiently large with respect to $c$. So, there is a choice of $W$, depending only on $c$, for which $\rho'<1/c$ and therefore $\rho=\rho'/(1-\rho')<1/(c-1)$, as desired. Since $p(t)$ is continuous, $\rho = \log(1/p(1))/\log(p(1)/p(c))$ is continuous in $c$; thus we can choose $c_0\in(1,c)$ sufficiently close to $c$ such that 
\begin{equation}
    \label{eq:lsh-reduction-slack}
    \frac{\log(1/p(1))}{\log(p(1)/p(c_0))}<\frac{1}{c-1},
\end{equation}
as required for the reduction to fixed radius search describe in \Cref{sec:fixedReduction}.

We conclude by noting that the number of hash functions $k=O(\log n/\log(p_1/p_2)) = O(\log n)$, since $\log(p_1/p_2)$ is a constant depending only on $c$.
\end{proof}

\subsection{The Steiner Comparison Gadget}
\label{app:steiner-comparison-gadget}

We next prove \Cref{lem:steiner-comparison-gadget}, the basic building block of our routing trees: two Steiner points whose comparison by greedy search evaluates a linear threshold function.

\begin{replemma}{lem:steiner-comparison-gadget}
Fix a nonzero vector $a\in\R^d$ and a threshold $\theta\in\R$. There exist two
Steiner points $s_L,s_R\in\R^d$ such that, for every query $q\in\R^d$,
\[
    \|q-s_L\|_2\, < \|q-s_R\|_2
    \quad \Longleftrightarrow \quad
    \langle a,q\rangle < \theta .
\]
Further, we can pick $s_L$ and $s_R$ so that $\norm{s_L}_2$ and $\norm{s_R}_2$ are arbitrarily large.
\end{replemma}
\begin{proof}
Let $\bar a=a/\norm{a}_2$ and $\tau=\theta/\norm{a}_2$. For any $\lambda>0$, set $s_L=(\tau-\lambda)\bar a$ and $s_R=(\tau+\lambda)\bar a$. Then
\[
    \norm{q-s_L}_2^2-\norm{q-s_R}_2^2
    =4\lambda(\langle\bar a,q\rangle-\tau)
    =\frac{4\lambda}{\norm{a}_2}(\langle a,q\rangle-\theta).
\]
This proves the desired equivalence. Moreover, both $\norm{s_L}_2$ and $\norm{s_R}_2$ are at least $\lambda-|\tau|$, and so can be made arbitrarily large.
\end{proof}

\subsection{Simulating One Hash Table}
\label{app:one-table-simulation}

We now use the comparison gadget of \Cref{lem:steiner-comparison-gadget} to prove our simulation result for a single LSH table, \Cref{lem:one-table-simulation-main}.
\begin{replemma}{lem:one-table-simulation-main}
Fix any \(0\le Q<\infty\), concatenated \cite{DIIM04} hash function  $g(x)=(h_{a_1,b_1}(x),\ldots,h_{a_k,b_k}(x))$,
and a dataset \(X\subset\mathbb{R}^d\) of size \(n\).
There is a directed graph \(G\) over \(X\) and $O(kn)$ additional Steiner points, with $O(kn)$ edges, and a Steiner starting node
\(s\), with the following
property.  For every query \(q\) with $\norm{q}_2 \le Q$, generalized greedy search
initiated at \(s\) and run for $T=O(k\log n)$ steps either:
\begin{enumerate}
    \item terminates at a designated Steiner failure node $s_{F}$, if \(X_{g(q)}=\varnothing\); or
    \item reaches a designated terminal node $s_{g(q)}$ if
          \(X_{g(q)}\neq\varnothing\), with a single out-edge to an arbitrary point in \(X_{g(q)}\), and then pops this out-neighbor off the search queue ($\mathcal{C}$ in \Cref{alg:gen_greedy_search}).
\end{enumerate}
All nodes in $G$ have out-degree $\le 2$. The search uses $O(T)$ distance computations and $\tilde O(Td)$ total time.
\end{replemma}

\begin{proof}
We first specify the tree and its edges, and then describe how to place its Steiner nodes, to ensure that generalized greedy search correctly progresses down the tree. For a nonempty set $Y\subseteq X$ and one hash function $h_{a_i,b_i}$, consider the occupied bucket values $\{h_{a_i,b_i}(x):x\in Y\}$. Each occupied value $j$ corresponds to values of $\frac{\langle a_i, x \rangle + b_i}{w}$ lying in the interval $[j,j+1]$. These intervals, together with the nonempty gaps between them and the two unbounded outer intervals, partition the real line into at most $2|Y|+1$ intervals. We build a balanced binary search tree over these intervals. Each internal node compares $\langle a_i,q\rangle$ with a boundary $\theta$, and its two children implement that comparison via \Cref{lem:steiner-comparison-gadget}. A leaf represents either an occupied bucket or a failure interval. The tree has $O(|Y|)$ nodes and depth $O(\log(|Y|+1))$.

For $k>1$, start with $Y=X$ and the first hash coordinate. Each successful leaf becomes the root of a tree for the next coordinate, restricted to the points that agree with \emph{all} previously selected hash values. I.e.,, for a prefix $I=(i_1,\ldots,i_{\ell})$, its active set is $X_I=\{x\in X:h_{a_1,b_1}(x)=i_1,\ldots,h_{a_{\ell},b_{\ell}}(x)=i_{\ell}\}.$

After all $k$ coordinates, each leaf corresponding to a non-empty interval is the terminal node $s_I$ for a nonempty bucket $X_I=\{x\in X:g(x)=I\}$. Give $s_I$ one out-edge to an arbitrary fixed point of $X_I$. Give every failure leaf one out-edge to the shared Steiner failure node $s_F$. Dataset nodes and $s_F$ have no out-edges.

At each coordinate, the active sets partition $X$, so the sum of their sizes is $n$. Hence the complete graph has $O(kn)$ Steiner nodes and edges. Every root-to-leaf path has length $O(k\log n)$, and every node has out-degree at most two. 

We next choose the Steiner points so that generalized greedy search follows the root-to-leaf path determined by $g(q)$. As discussed, we make  deeper Steiner nodes, and all data points, closer to every query with $\norm{q}_2\le Q$ than shallower Steiner nodes. Then the search can never return to an unchosen sibling. Let $M=\max_{x\in X}\norm{x}_2$. Pick $s_F$ to have norm greater than $M+2Q$. Assign $s_F$ a depth greater than every tree node, and place the tree nodes in decreasing order of depth. When placing a pair of children at one depth, use the free parameter $\lambda$ in \Cref{lem:steiner-comparison-gadget} to make both norms exceed the largest norm at any greater depth by more than $2Q$. The root, whose location is unconstrained, is placed last with the same norm separation.

For any $\norm{q}_2\le Q$ and Steiner nodes $u,v$ with $u$ at smaller depth than $v$, this placement ensures
\[
    \norm{q-u}_2\ge\norm{u}_2-Q
    >\norm{v}_2+Q\ge\norm{q-v}_2.
\]
Also, every data point $x$ is closer to $q$ than every Steiner point $u$, since $\norm{q-x}_2\le M+Q<\norm{u}_2-Q\le\norm{q-u}_2$. When an internal node is popped, its children are therefore closer than all earlier unchosen siblings still in $\mathcal C$. The next node popped is its correct child, and thus the greedy search follows the comparisons defining $g(q)$. If the bucket is empty, it reaches a failure leaf and then pops $s_F$. Otherwise, it reaches $s_{g(q)}$, inserts its representative data point, and pops that data point next. Stopping at $s_F$ or at this data point proves the routing guarantee.

The search takes $T=O(k\log n)$ steps. Each step performs at most two distance computations and $O(1)$ priority-queue and dictionary operations on structures of size $O(T)$, giving $\tilde O(T d)$ query time. %For later use, computing the hashes and sorting the occupied bucket values takes $O(dkn+kn\log n)$ preprocessing time. Once the trees are built, placing their nodes from the leaves upward takes $O(dkn)$ additional time.
\end{proof}

\subsection{Fixed-Radius Upper Bound}
\label{app:all-table-simulation}

We next describe how to chain $L$ one-table graphs together to simulate the full fixed-radius search procedure of \Cref{thm:datar-lshSimple}, proving \Cref{thm:fixed-radius-steiner}.

\begin{reptheorem}{thm:fixed-radius-steiner}
Fix $R>0$, $c>1$, $\delta\in(0,1)$, and any $Q<\infty$. There is a constant $\rho$ depending only on $c$ with $\rho < \frac{1}{c-1}$ such that, for any $n$-point dataset $X\subseteq\R^d$, there is a randomized polynomial time algorithm constructing a directed graph $G$ on $X$ and $\tilde O(n^{1+\rho}\log(1/\delta))$ additional Steiner nodes, and a Steiner starting node $s$, with the following property. For every fixed $q$ with $\norm{q}_2\le Q$ and $\min_{x\in X}\norm{x-q}_2\le R$, generalized greedy search initialized at $s$ pops at least one point $x\in X$ with $\norm{x-q}_2\le cR$ within $T=\tilde O(n^{\rho} \log(1/\delta))$ steps, with probability at least $1-\delta$. All nodes in $G$ have out-degree at most two, so running generalized greedy search for $T$ steps requires $O(T)$ distance computations and takes $\tilde O(Td)$ time.
\end{reptheorem}
\begin{proof}
Choose $k$ and $L$ as in \Cref{thm:datar-lshSimple}, and build the $L$ one-table graphs from \Cref{lem:one-table-simulation-main}. These graphs all contain the dataset nodes, but have disjoint sets of Steiner nodes. Use the first table's root as the global starting node $s$. In every table except the last, add an out-edge from each terminal node $s_I$ and from $s_F$ to the next table's root. Thus a terminal node has at most two out-edges (to the corresponding dataset node and to the next root), and every other node still has out-degree at most two. Dataset nodes have no out-edges.

Apply the same Steiner node placement approach as in the proof of \Cref{lem:one-table-simulation-main}, processing the tables from last to first, so that Steiner nodes in the earlier tables have larger norms. In addition to the within-table separation, ensure that every Steiner node norm in an earlier table exceeds every Steiner node norm in a later table by at least $2Q$. All Steiner node norms still exceed $M+2Q$, where $M=\max_{x\in X}\norm{x}_2$. %As before, changing the free parameters of the comparison gadgets preserves their thresholds.

Fix $q$ with $\norm{q}_2\le Q$. Within each table, search routes as in \Cref{lem:one-table-simulation-main}. In a nonempty bucket, popping the terminal node inserts its representative from the dataset and the next table's root. The representative, if not seen before, is popped next, since data points are closer to $q$ than all Steiner points. The next table's root is closer than all remaining Steiner nodes from earlier tables, so the search then proceeds to that table. In an empty bucket, search pops $s_F$ and proceeds directly to the next root.

From \Cref{lem:one-table-simulation-main} it follows that, within $T=O(Lk\log n)$ steps, search has popped the representative of every nonempty queried bucket. In particular, \Cref{thm:datar-lshSimple} implies that a point within distance $cR$ is popped with probability at least $1-\delta$ whenever some point lies within distance $R$. %We therefore terminate search after $T$ steps, or earlier if the queue empties.

The graph has $O(Lkn) = \tilde O(n^{1+ \rho} \log(1/\delta))$ Steiner nodes and edges. %The preprocessing bound from \Cref{app:one-table-simulation} gives $O((d+\log n)Lkn)$ construction time. 
As in \Cref{lem:one-table-simulation-main}, each step requires $O(1)$ distance computations and so the overall query time is $\tilde O(Td)$. %Substituting $k=O(\log n)$ and $L=O(n^{1/(c-1)}\log(1/\delta))$ proves all stated bounds.
%The ANN reduction in \Cref{app:steiner-lsh-to-ann} does not need a FAIL output. To solve the $(R,c)$-near-neighbor problem itself, return the closest popped data point if it is within distance $cR$ of $q$, and FAIL otherwise. With $Q=M+cR$, FAIL is also correct for every query with $\norm{q}_2>Q$.
\end{proof}

\subsection{The reduction from ANN to fixed-radius search}
\label{app:steiner-lsh-to-ann}

Finally, we combine our fix-radius search result (\Cref{thm:fixed-radius-steiner}) with the standard reduction from c-approximate nearest neighbor search of \cite{IM98} to prove our main result on fast graph-based search with Steiner points, \Cref{thm:ubIntro}.

\begin{reptheorem}{thm:ubIntro}
There exists a polynomial time algorithm that, given any $n$-point dataset $X \in \R^d$, constructs a search graph $G$ over $X$ and $\tilde O(n^{1+1/(c-1)} \log(\Delta) \log(1/\delta))$ additional Steiner points, such that generalized greedy search on $G$ run for $\tilde O(n^{1/(c-1)}  \log(\Delta) \log(1/\delta))$ steps starting from  an appropriate start node, solves $c$-approximate Euclidean nearest-neighbor search in $\tilde O(d n^{1/(c-1)})$ time, with probability at least $1-\delta$. Here $\Delta = \frac{\max_{i \neq j} \norm{x_i-x_j}_2}{\min_{i \neq j} \norm{x_i-x_j}_2}$ is the aspect ratio of the dataset. The big-O notation hides constants that depend on $c$.
\end{reptheorem}
\begin{proof}
As described in \Cref{sec:fixedReduction}, we chain fixed-radius search graphs for geometrically spaced radii into a single search graph. As shown, letting $R_{\max}=\frac{\mathrm{diam}(X)}{c-1}$ and $Q=\max_{x\in X}\norm{x}_2+R_{\max}$, for any query $q$ with $\norm{q}_2 > Q$, all datapoints are $c$-approximate nearest neighbors. This includes any dataset point popped during the search, or the default output if no dataset point is every popped. Thus, correctness only needs to be shown for queries with $\norm{q_2} \le Q$. 

\smallskip

\noindent\textbf{Radii and Graph Construction.} Choose $c_0\in(1,c)$ satisfying \eqref{eq:lsh-reduction-slack}. Let $r=\min_{i\ne j}\norm{x_i-x_j}_2$ and build fixed-radius search graphs as in \Cref{thm:fixed-radius-steiner} with approximation factor $c_0$ for the radii
\[
    R_0=\frac{r}{c+1},
    \qquad
    R_i=(c/c_0)^iR_0\quad (i=0,\ldots,m),
    \qquad
    m=\left\lceil\log_{c/c_0}(R_{\max}/R_0)\right\rceil.
\]
Since $R_{\max}/R_0=(c+1)\Delta/(c-1)$, there are $O(\log\Delta)$ radii, with a constant depending on $c/c_0$.

Concatenate the search graphs for each radii into one graph using the same idea is in \Cref{thm:fixed-radius-steiner}: terminal Steiner nodes from the last table in one search graph are connected to the starting vertex for the next search graph, and Steiner node norms are chosen so that nodes in the earlier search graphs are always further from $q$ than nodes in the later search graphs. Data set nodes remain sinks with no out-edges. By \Cref{thm:fixed-radius-steiner}, for every query with $\norm{q}_2\le Q$, one run of generalized greedy search pops all selected representatives within $T=O((m+1)Lk\log n) = \tilde O(n^{1+1/(c-1)} \log(\Delta) \log(1/\delta))$ steps (and running time since all out-degrees are still at most two). We terminate search after $T$ steps, or earlier if the queue empties. The output is the closest data point popped, or the default output $x_1$ if no data point was popped.

\smallskip

\noindent\textbf{Correctness.}  Fix $q$, let $R^*=\min_{x\in X}\norm{x-q}_2$, and assume $R^*<R_{\max}$, so that $\norm{q}_2\le Q$. Let $i^*=0$ if $R^*\le R_0$. Otherwise, let $i^*$ be the smallest index with $R_{i^*}\ge R^*$, which exists since $R_m\ge R_{\max}>R^*$. We only use the fixed-radius guarantee at the single radius $R_{i^*}$. This index is determined by $q$ and $X$ alone, and $q$ is fixed independently of the random hash functions. So \Cref{thm:fixed-radius-steiner} applies at radius $R_{i^*}$: with probability at least $1-\delta$, the search pops a point $x$ with $\norm{q-x}_2\le c_0R_{i^*}$. %No union bound over radii is needed, so the overall failure probability is at most $\delta$.

On this event, $x$ is a valid answer. If $i^*=0$, then $x=x^*$, since every other $y\in X$ satisfies
\[
    \norm{q-y}_2\ge r-R^*\ge r-R_0 = c R_0 >c_0R_0.
\]
If $i^*>0$, then $R_{i^*-1}<R^*$, so $R_{i^*}<(c/c_0)R^*$ and
\[
    \norm{q-x}_2\le c_0R_{i^*}\le cR^*.
\]
Since generalized greedy search outputs the closest data point it popped, its output is only close to $q$ than $x$, and so is also a valid answer.

\end{proof}

\section{Empirical Evaluation: Details and Additional Results}\label{app:experiments} 

We now provide full details of our experimental setup (search procedure, datasets, infrastructure, and Steiner point generation) and additional recall-vs-computation plots for the datasets not shown in the main body. We include results for SIFT (Euclidean distance and cosine similarity), Fashion-MNIST, and YFCC, as well as results with the maximum degree $R=64$ for GloVe and NYTimes.

\smallskip

\noindent \textbf{Preliminary experiments.}
Prior to the large-scale evaluation, we conducted extensive preliminary experiments on smaller subsets ($12$K vectors) across multiple datasets and multiple strategies for constructing Steiner points. In addition to $k$-means cluster centroids, we evaluated midpoints of randomly sampled pairs of data points and midpoints of pairs that are close in the metric space but far apart in graph distance (i.e., pairs requiring many hops to reach each other despite being nearby in $\mathbb{R}^d$). Among all strategies tested, cluster centroids consistently yielded the best recall-vs-computation tradeoffs, which motivated our decision to focus on centroid-based Steiner augmentation for the full-scale experiments. These preliminary runs, along with several failed configurations due to metric mismatches and timeout issues, account for roughly twice the compute of the final reported experiments.

\subsection{Experimental Setup}

\textbf{Search procedure.}
We build the Vamana (DiskANN) \citep{SubramanyaDevvritKadekodi:2019} graph over the augmented point set using ParlayANN's~\citep{ManoharShenBlelloch:2024} parallel C++ implementation. At query time, we run beam search with varying beam widths $W$ on the augmented graph. Steiner nodes participate fully in the beam (they can be expanded and their edges traversed) but they are excluded from the final result set. We ran beam search with varying beam widths using either Euclidean distance or inverse cosine similarity as our distance metric, dependent on the dataset. For Steiner-augmented graphs, the total number of distance computations includes overhead of evaluating distances to Steiner nodes during beam search. By default, ParlayANN randomly permutes the dataset, hence Steiner points are at random positions in the order of insertion. Steiner points were generated using FAISS's $k$-means implementation.

\smallskip

\noindent
\textbf{Parameters.} We evaluate each configuration across a geometric progression of beam widths $W = 4,\, 8,\, 16,\, 32, \dots$, doubling initially and later scaling by a factor of $1.5$ until the total number of distance computations reaches a fraction $0.1$ of the dataset size. We vary the maximum degree as $R \in \{32, 64\}$, and set build beam width $L = 128$, and pruning parameter $\alpha = 1.2$ across all experiments. For querying, we subsample 1000 queries uniformly at random from the test datasets for all experiments.

\smallskip

\noindent
\textbf{Dataset sources and preprocessing.}
GloVe-200, NYTimes, SIFT, and Fashion-MNIST are sourced from ANN-Benchmarks~\citep{AumullerBernhardssonFaithfull:2020}. YFCC is sourced from the NeurIPS'21 Big-ANN benchmark~\citep{SimhadriAumullerDouze:2026}; we uniformly subsample $1$M vectors from the full $10$M-vector dataset. 
We summarize all datasets in \Cref{tab:datasets}.

\begin{table}[t]
\centering
\begin{tabular}{|l|c|c|c|l|}
\hline
\textbf{Dataset} & \textbf{Dimensions} & \textbf{\# Data Points} & \textbf{\# Query Points} & \textbf{Metric} \\
\hline
GloVe-200 & 200 & 1,000,000 & 1,000 & Cosine \\
\hline
SIFT & 128 & 1,000,000 & 1,000 & Euclidean \\
\hline
SIFT & 128 & 1,000,000 & 1,000 & Cosine \\
\hline
NYTimes & 256 & 290,000 & 1,000 & Cosine \\
\hline
Fashion-MNIST & 784 & 60,000 & 1,000 & Cosine \\
\hline
YFCC & 192 & 1,000,000 & 1,000 & Cosine \\
\hline
\end{tabular}
\caption{Datasets used for evaluation.}
\label{tab:datasets}
\end{table}

\smallskip

\noindent
\textbf{Licenses.}
All software and datasets used in this work are available in the public domain. ParlayANN~\citep{ManoharShenBlelloch:2024}, used for graph construction, is released under the MIT License, as is FAISS, which we use for $k$-means clustering. Among the datasets, GloVe is released under the Public Domain Dedication and License (PDDL v1.0); SIFT is released under Creative Commons (CC) BY 4.0; Fashion-MNIST is released under the MIT License; and NYTimes is distributed via ANN-Benchmarks~\citep{AumullerBernhardssonFaithfull:2020} and is licensed under CC BY 4.0. The YFCC embeddings are sourced from the NeurIPS'21 Big-ANN benchmark~\citep{SimhadriAumullerDouze:2026}, licensed under CC BY 4.0.

\smallskip

\noindent
\textbf{Infrastructure.}
All large-scale experiments were run on the Unity high-performance computing cluster at the University of Massachusetts Amherst. Each job was allocated a single node with $64$ CPU cores and $64$\,GB of RAM on the \texttt{cpu} partition, running Ubuntu 24.04 with Python 3.11.7. Graph construction was performed using ParlayANN, a parallel C++ implementation of the Vamana algorithm that leverages all available cores.

\smallskip

\noindent
\textbf{Runtime.}
For the $1$M-vector datasets (GloVe, SIFT, YFCC), graph construction and search evaluation across all beam widths for $1{,}000$ queries took $6$--$7$ hours per configuration. For the smaller datasets (NYTimes at $290$K, Fashion-MNIST at $60$K), each configuration completed in $1$--$2$ hours. Each dataset required $6$ SLURM jobs (three Steiner percentages $\times$ two degree bounds), bringing the total compute for the full experimental sweep to approximately $300$ CPU-hours.

\smallskip

\noindent
\textbf{Ground truth.}
Ground truth nearest neighbors for computing recall values were computed by brute-force search. Since this was a single-time computation, the total computation time required was insignificant compared to the time required to run the full experimental setup.

%For datasets where the search metric differed from the original dataset metric (e.g., Fashion-MNIST and YFCC, which are natively Euclidean but evaluated under cosine similarity), ground truth was recomputed at runtime to ensure consistency.

\subsection{Additional plots and results}

\smallskip

\noindent
\textbf{Results on GloVe-200 and NYTimes.} 
We show the results for $R=32$ in the main paper, \Cref{fig:main-results}. The plots for $R=64$ are in \Cref{fig: glove R64} and \Cref{fig: nyt R64}, where Steiner augmentation still improves over the baseline except for Recall@100 on GloVe, with smaller margins. For NYTimes, similar to $R=32$, Steiner points help for Recall@100 as well.

\smallskip

\noindent
\textbf{Results on SIFT.}
On SIFT (\Cref{fig: sift R32 cosine}, \Cref{fig: sift R64 cosine}, \Cref{fig: sift R32 eucl}, and \Cref{fig: sift R64 eucl}), the baseline Vamana graph already achieves near-perfect recall (above $0.99$ for Recall@1 and Recall@10) at low beam widths. All three Steiner configurations slightly underperform the baseline, with the gap narrowing as the Steiner percentage increases ($25\% > 15\% > 5\%$). The effect is small in absolute terms (less than $1$ percentage point) but consistent across beam widths and degree bounds. Understanding precisely why Steiner augmentation does not yield improvements in this high-recall regime is an interesting direction for further investigation.

\smallskip

\noindent
\textbf{Results on Fashion-MNIST.}
On Fashion-MNIST in \Cref{fig: fmnist R32} and \Cref{fig: fmnist R64}, the baseline already exceeds Recall@1 of $0.95$ at the smallest beam width tested. All Steiner configurations closely track the baseline for Recall@1 and Recall@10, with the $25\%$ configuration showing mild degradation in Recall@100. As with SIFT, the baseline recall is sufficiently high that there is limited room for improvement.

\smallskip

\noindent
\textbf{Results on YFCC.}
YFCC in \Cref{fig: yfcc R32} and \Cref{fig: yfcc R64} exhibits the most pronounced gap between the Steiner-augmented and baseline graphs. The baseline reaches Recall@1 above $0.97$ at low beam widths, while all Steiner configurations fall short at matched computation budgets, with differences of $1$--$3$ percentage points. Among the three configurations, higher Steiner percentages perform closer to the baseline ($25\% > 15\% > 5\%$). The cause of the degradation on YFCC, which is notably more severe than on SIFT or Fashion-MNIST despite similarly high baseline recall, remains an open question.

\begin{figure}[H]
    \centering
    \includegraphics[width=.95\textwidth]{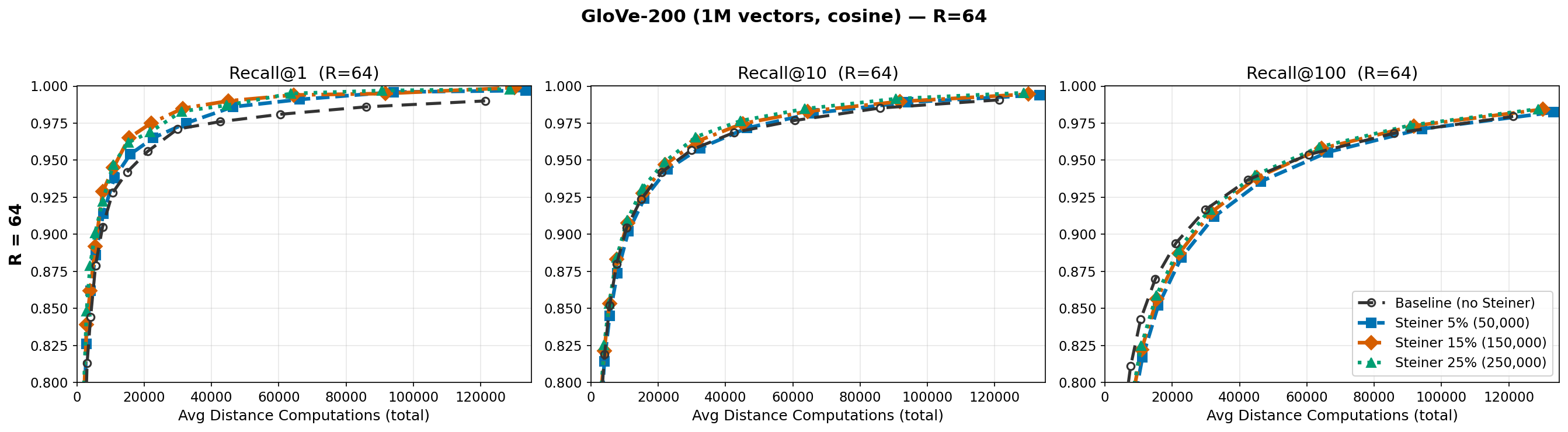}
    \caption{GloVe-200 (1M, cosine) — $R=64$}\label{fig: glove R64}
\end{figure}
\vspace{-.5em}
\begin{figure}[H]
    \centering
    \includegraphics[width=.95\textwidth]{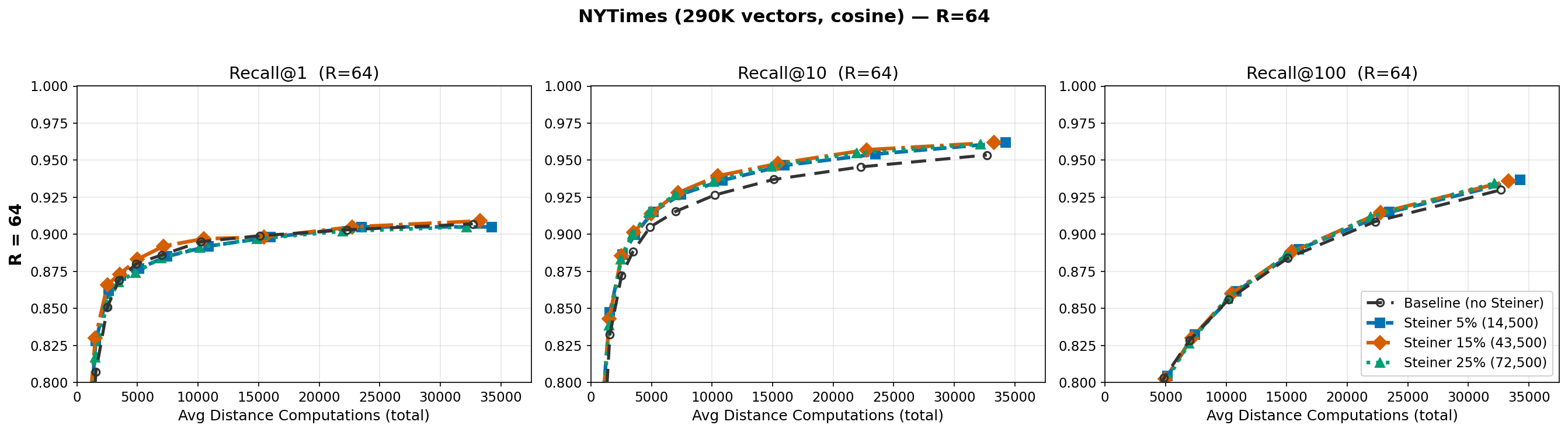}
    \caption{NYTimes (290K, cosine) — $R=64$}\label{fig: nyt R64}
\end{figure}
\vspace{-.5em}
\begin{figure}[H]
    \centering
    \includegraphics[width=.95\textwidth]{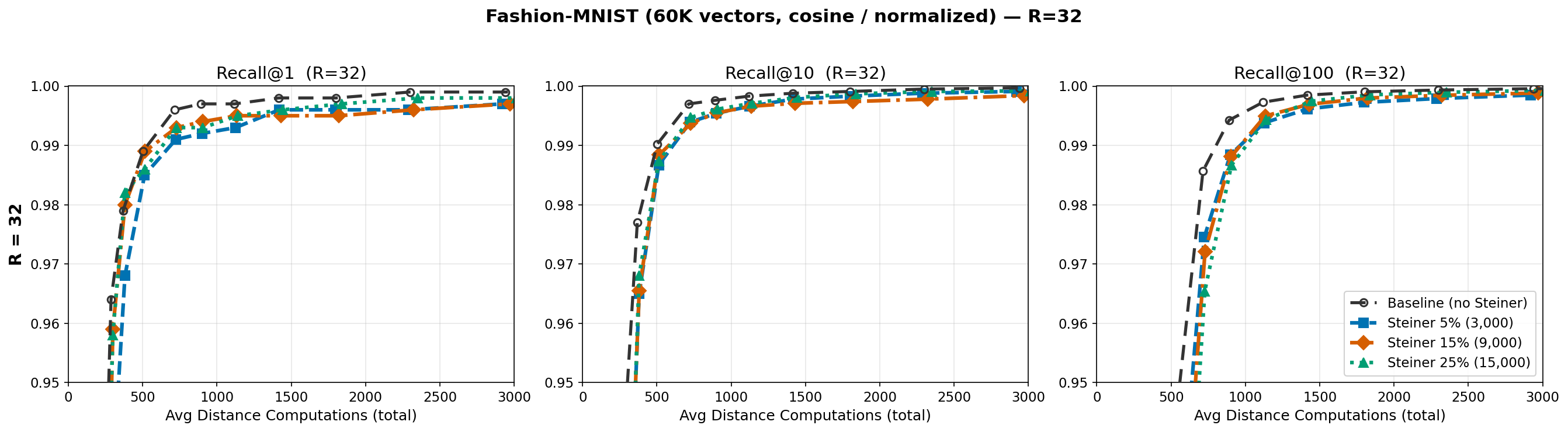}
    \caption{Fashion-MNIST (60K, cosine) — $R=32$}\label{fig: fmnist R32}
\end{figure}
\vspace{-.5em}
\begin{figure}[H]
    \centering
    \includegraphics[width=.95\textwidth]{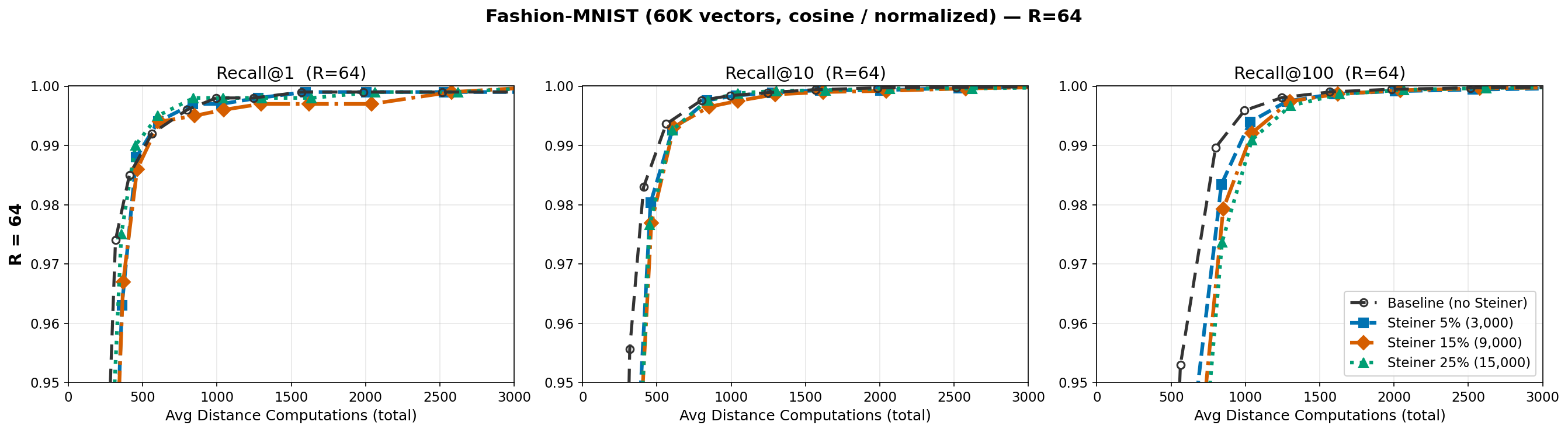}
    \caption{Fashion-MNIST (60K, cosine) — $R=64$}\label{fig: fmnist R64}
\end{figure}
\vspace{-.5em}
\begin{figure}[H]
    \centering
    \includegraphics[width=.95\textwidth]{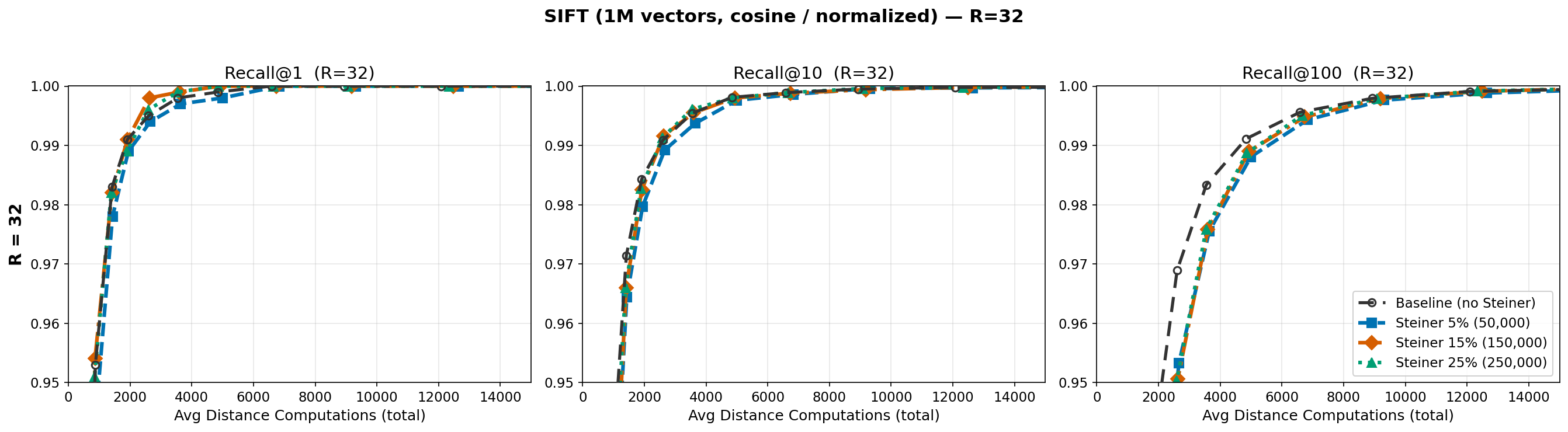}
    \caption{SIFT (1M, cosine) — $R=32$}\label{fig: sift R32 cosine}
\end{figure}
\vspace{-.5em}
\begin{figure}[H]
    \centering
    \includegraphics[width=.95\textwidth]{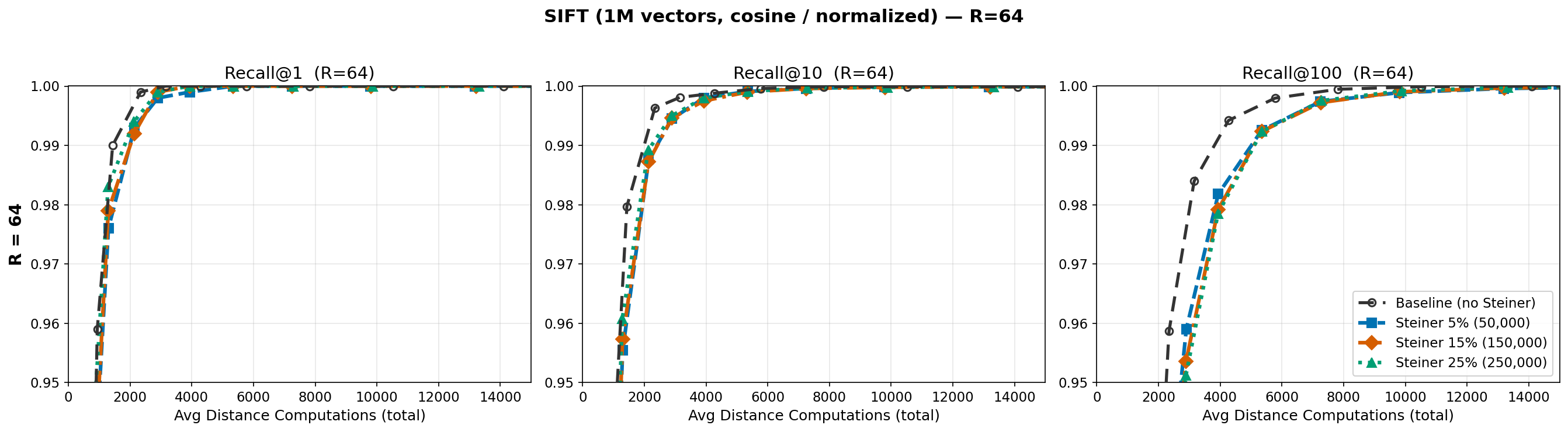}
    \caption{SIFT (1M, cosine) — $R=64$}\label{fig: sift R64 cosine}
\end{figure}
\vspace{-.5em}
\begin{figure}[H]
    \centering
    \includegraphics[width=.95\textwidth]{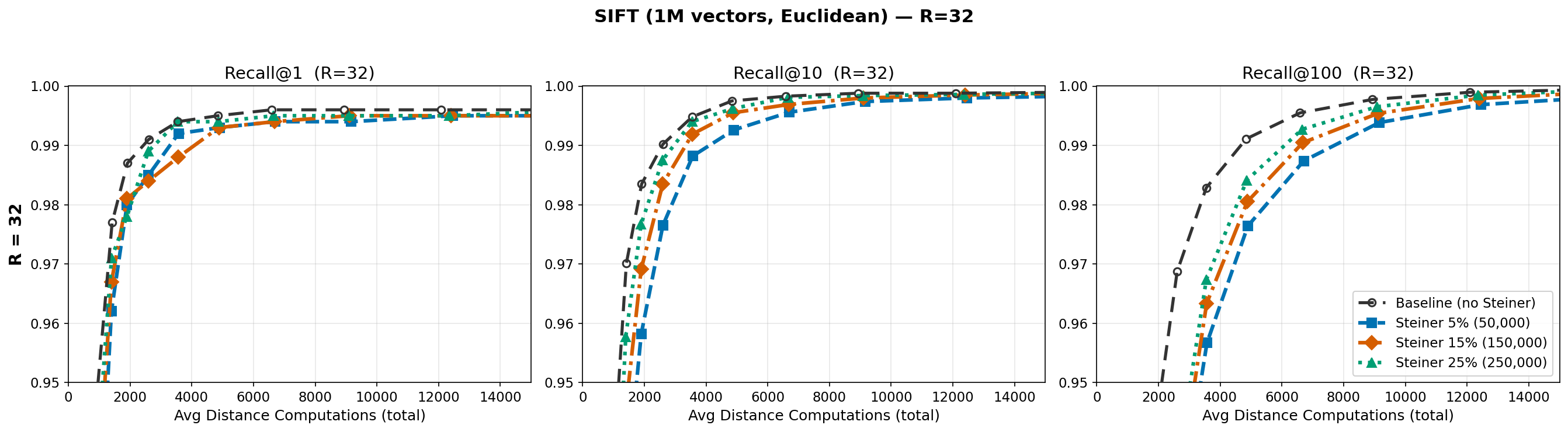}
    \caption{SIFT (1M, Euclidean) — $R=32$}\label{fig: sift R32 eucl}
\end{figure}
\vspace{-.5em}
\begin{figure}[H]
    \centering
    \includegraphics[width=.95\textwidth]{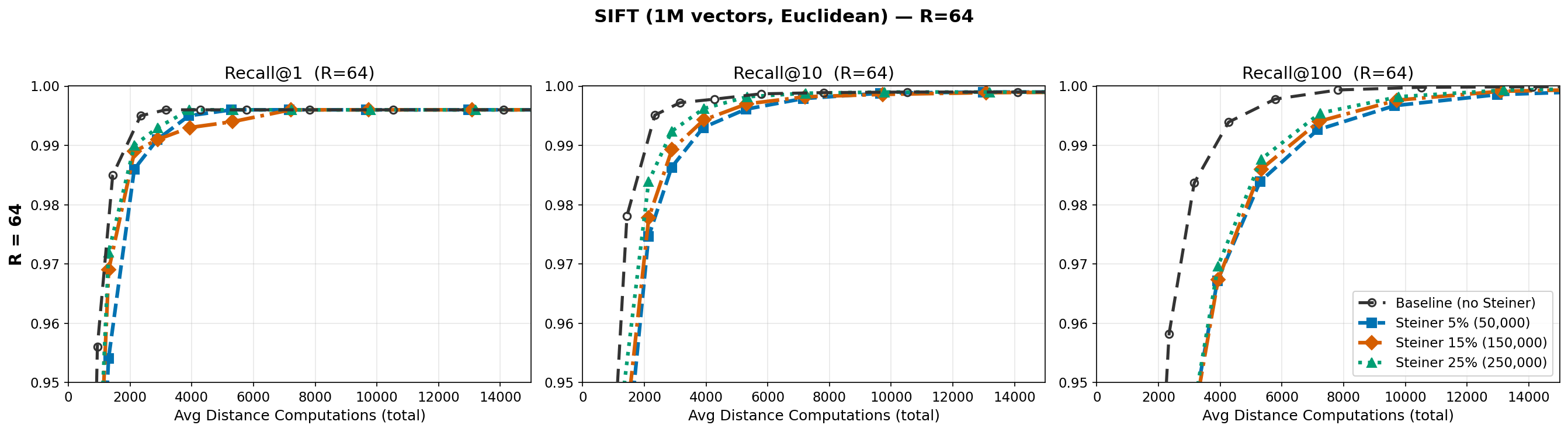}
    \caption{SIFT (1M, Euclidean) — $R=64$}\label{fig: sift R64 eucl}
\end{figure}
\vspace{-.5em}
\begin{figure}[H]
    \centering
    \includegraphics[width=.95\textwidth]{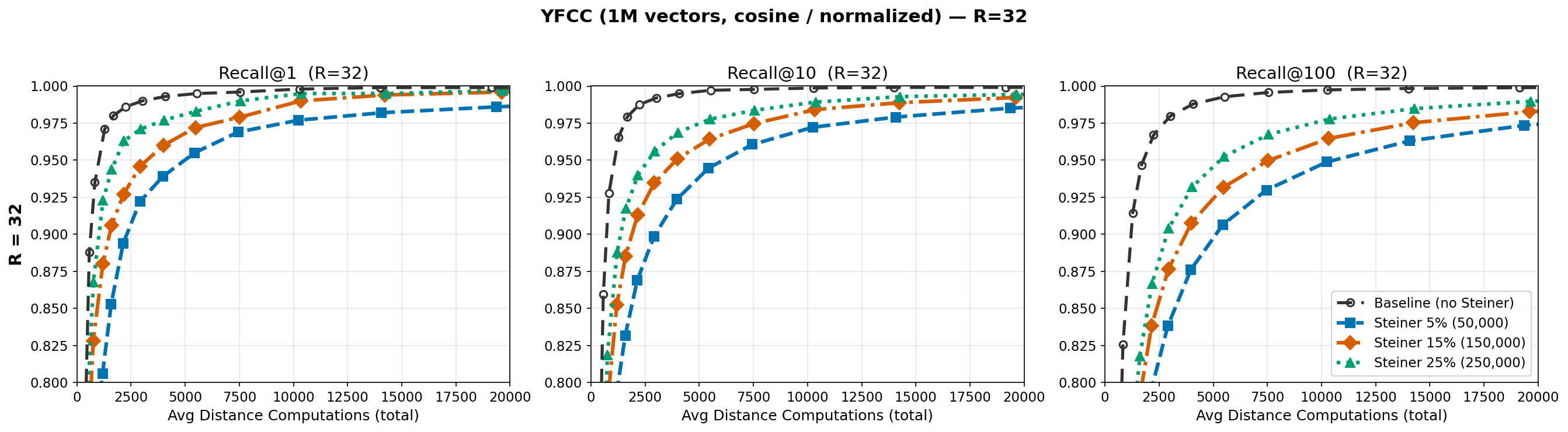}
    \caption{YFCC (1M, cosine) — $R=32$}\label{fig: yfcc R32}
\end{figure}
\vspace{-.5em}
\begin{figure}[H]
    \centering
    \includegraphics[width=.95\textwidth]{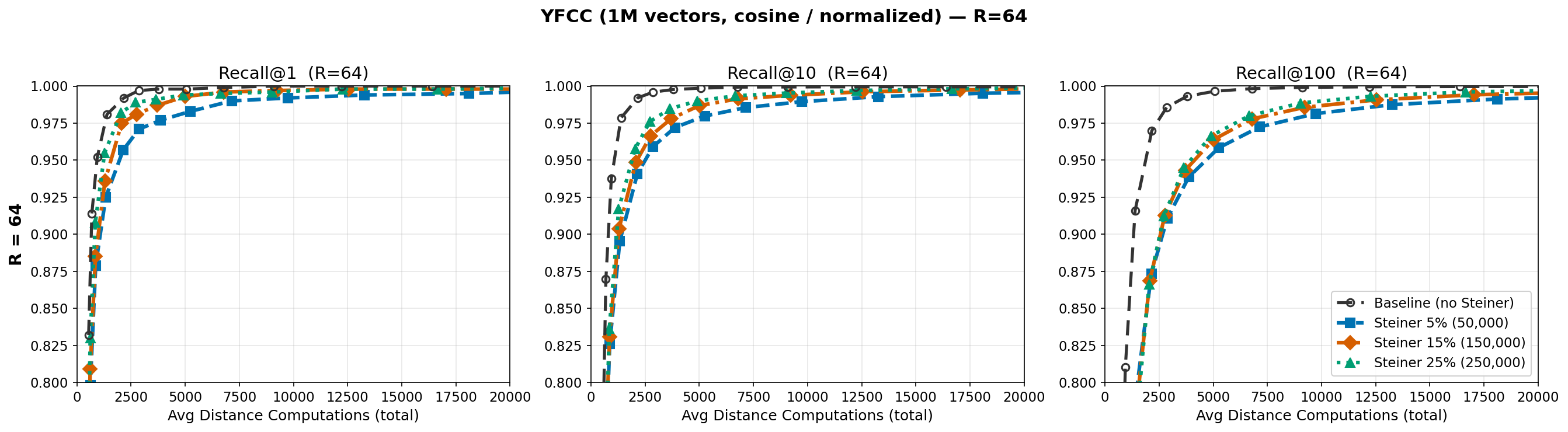}
    \caption{YFCC (1M, cosine) — $R=64$}\label{fig: yfcc R64}
\end{figure}

\end{document}